\newif\ifPacks
\newif\ifHighlights
	\newcommand{\changed}[1]{{\color{red}#1}}
	\newcommand{\changed}[1]{#1}
\begin{document}


\title{A constraint on local definitions of quantum internal energy}

\author{Luis Rodrigo Torres Neves}
\email{rodrigoneves@usp.br}
\affiliation{
 Instituto de F\'isica de S\~ao Carlos, Universidade de S\~ao Paulo -- S\~ao Carlos (SP), Brasil 
}
\author{Frederico Brito}
\email{fbb@ifsc.usp.br}
\affiliation{
 Instituto de F\'isica de S\~ao Carlos, Universidade de S\~ao Paulo -- S\~ao Carlos (SP), Brasil 
}
\affiliation{Quantum Research Centre, Technology Innovation Institute, P.O. Box 9639, Abu Dhabi,UAE}

\date{\today}




\newcommand{\red}[1]{{\color{red}#1}}
\newcommand{\blue}[1]{{\color{blue}#1}}
\newcommand{\warning}[1]{
\red{\textbf{\underline{#1}}}
}

\newcommand{\citneed}{
\textbf{\red{[citation needed!]}}
}

\newcommand{\cf}{\textit{cf.} }
\newcommand{\ie}[1]{\textit{i. e.}#1}
\newcommand{\eg}[1]{\textit{e. g.}#1}

\theoremstyle{plain}
\newtheorem{definition}{Definition}
\newtheorem{proposition}{Proposition}
\newtheorem{theorem}{Theorem}
\newtheorem*{theorem*}{Theorem}

\theoremstyle{definition}
\newtheorem{remark}{Remark}[definition]
\newtheorem{remarkprop}{Note}[proposition]
\newtheorem*{remarkthm*}{Note}

\newtheorem{prob}{Problem}

\newenvironment{openboxed}{
\hspace{-.6mm-\parindent}
\rule{\linewidth}{.2mm}
}{
\hspace{-.6mm-\parindent}
\rule{\linewidth}{.2mm}
}

\newenvironment{problem}{
\begin{openboxed}
\vspace{-8mm}
\begin{prob}
}{
\end{prob}
\vspace{-6mm}
\end{openboxed}
}


\newcommand{\der}[3]{       
\frac{\text{d}^{#3} #1}{\text{d} #2 ^{#3} } 
}  

\newcommand{\dpar}[3]{       
\frac{\partial ^{#3} #1}{\partial #2 ^{#3} } 
}  

\newcommand{\df}{\text{d}}  

\newcommand{\Df}[1]{	
\text{D}_{#1}
}

\newcommand{\dt}{ \df_t } 

\newcommand{\norm}[1]{  
\left \| #1 \right \|
}

\newcommand{\abs}[1]{       
\left | #1 \right |
}

\newcommand{\absq}[1]{      
\left | #1 \right |^2
}

\renewcommand{\set}[1]{     
\left\{ #1 \right\}
}

\newcommand{\expp}[1]{      
\exp{\left( #1 \right)}
}

\newcommand{\expb}[1]{      
\exp{\left[ #1 \right]}
}

\newcommand{\expcb}[1]{      
\exp{\left\{ #1 \right\}}
}

\newcommand{\im}{ 
\, \text{Im} \,
}

\newcommand{\re}{  
\, \text{Re} \,
}

\newcommand{\avg}[1]{       
\braket{#1}
}


\newcommand{\kb}[1]{    
\ket{#1}\bra{#1}
}

\newcommand{\kbb}[2]{   
\ket{#1}\bra{#2}
}

\newcommand{\hc}{^\dagger}  

\newcommand{\tr}{       
\text{Tr} 
}

\newcommand{\trp}[1]{   
\text{Tr}_{\text{#1}}
}

\newcommand{\her}{	
\mathcal{L}_H
}

\newcommand{\hintantigo}{H^\text{int.}
}

\newcommand{\hint}{H^I
}

\newcommand{\sj}{^{(j)}}

\newcommand{\sa}{^A}
\renewcommand{\sb}{^B}

\newcommand{\omj}{\omega \sj}

\newcommand{\oma}{\omega \sa}
\newcommand{\omb}{\omega \sb}

\newcommand{\ef}{\Tilde}

\newcommand{\atx}[1]{
 \left. #1 \right| _{\mathbf{X}}
}

\renewcommand{\int}{
\text{int}
}

\begin{abstract}
Recent advances in quantum thermodynamics have been focusing on ever more elementary systems of interest, approaching the limit of a single qubit, with correlations, strong coupling and far-from-equilibrium environments coming into play. Under such scenarios, it is clear that fundamental physical quantities must be revisited. This article questions whether a universal definition of internal energy for open quantum systems can be devised, setting limits on its possible properties. We argue that, for such a definition to be regarded as local, it should be determined by using only local resources, i.e., the open system's reduced density operator $\varrho$ and its time derivatives. The simplest construction, then, would be a functional $U(\varrho, \dot{\varrho})$. We adopt the minimalist implementation of a bipartite quantum universe, namely two qubits in a pure joint state, and show that the functional relationship cannot be that simple if it is to generally recover the well-established internal energy of the universe. No further hypothesis or approximation scheme was assumed. An illustrative counter-example is explored, and possible implications of the general constraint are discussed.

\end{abstract}

\keywords{Quantum thermodynamics; open quantum systems; autonomous quantum systems; internal energy.}

\maketitle


\section{Introduction}\label{sec:intro}

	One of the central tasks in the contemporary theoretical field of quantum thermodynamics is investigating to what extent, if any, the principles and concepts of thermodynamics may be applied to far-from-equilibrium quantum mechanical systems, ultimately those containing a single degree of freedom. 
Arguably, the most fundamental challenges are faced under the presence of correlations, strong system-environment coupling, far-from-equilibrium initial conditions, and most especially when every physical entity is explicitly conceived as a quantum mechanical system. Under such a general regime, it is not clear how, or even whether, one should define the internal energy of an open quantum system. 
	
	In the past few years, in an effort to expand the scope of Alicki’s paradigmatic approach \cite{Alicki_1979}, numerous contributions have been partly or totally devoted to the problem of addressing definitions of \textit{work} and/or \textit{heat} for open systems in the presence of one or many of the above mentioned generalizing features \cite{Weimer_etal_2008, Hossein-Nejad_etal_2015, Alipour_etal_2016, Rivas_2020, Silva_Angelo_2021, Colla_Breuer_2021, Alipour_etal_2022}. Explicitly or not, each relies on a particular notion of internal energy. In turn, still on fairly general contexts, other recent works address specific questions in quantum thermodynamics while relying, often implicitly, on presumably valid definitions of open-system internal energy 
\cite{
	Ochoa_etal_2016,  Dou_etal_2018, Valente_etal_2018, Pyharanta_etal_2022, Strasberg_2019,  
	Carrega_etal_2016, Micadei_etal_2019, Ali_Huang_Zhang_2020,Bernardo_2021}. 
Since there is no consensus with regards to what physical principle should guide such a definition, many of those approaches, being by principle applicable to identical physical situations, may be expected to fall into contradiction with one another. 

	In this work, we are concerned with the full quantum paradigm, in which the presence of external (classical) driving is expendable. For simplicity, we investigate a closed, autonomous, bipartite quantum system, \ie{}, a ``quantum universe'' consisting of two coupled, individual quantum systems. Therefore, our main interest is the question of \textit{energy exchange between (two) quantum systems}. In this context, the basic element always considered is the universe Hamiltonian, $H=H^S+\hint+H^E$, with three additive components -- usually labelled system ($S$), interaction ($I$), and environment ($E$), respectively, which are often taken to depend on time, but assumed to be constant in our main discussion. In this realm, some authors assume the internal energy of $S$ to be given by the quantum mechanical average of its ``bare'' Hamiltonian, $U^S=\avg{H^S}$ \cite{Carrega_etal_2016, Micadei_etal_2019, Ali_Huang_Zhang_2020, Bernardo_2021, Alipour_etal_2022}, whereas many others sustain that, in general, the notion of internal energy should rather derive from a corrected energy observable, sometimes called an ``effective Hamiltonian'', incorporating effects of the interaction with the environment \cite{Weimer_etal_2008, Hossein-Nejad_etal_2015, Alipour_etal_2016, Ochoa_etal_2016, Dou_etal_2018, Valente_etal_2018, Rivas_2020, Colla_Breuer_2021, Pyharanta_etal_2022}. Arguably, that typically means that the interaction energy is somehow divided among the two interacting parties, a hypothesis made explicit for instance in \cite{Hossein-Nejad_etal_2015, Ochoa_etal_2016, Dou_etal_2018}; it has also been proposed that the decomposition should include a third share, not assigned to either system but to the correlations between them \citep{Alipour_etal_2016}. 
		
	We place ourselves closer to the effective Hamiltonian point of view, for two reasons. In the first place, it is widely accepted that the physically observable energy of a quantum mechanical system is modified by the presence of an environment, as in the case of the Lamb and Stark shifts.\footnote{\textit{Cf.} \cite{Welton_1948}; see also \cite{Breuer_Petruccione_2002}, pp. 136, 145, and 586.} Accordingly, it is a commonplace in the field of open quantum systems to speak of ``renormalized system frequency'', ``instantaneous system frequency'', ``time-dependent frequency shift'', and so forth, to refer to the Hamiltonian term in the equation of motion, which is typically identified with the Lamb shift in the proper limit \citep{Breuer_Petruccione_2002, Vacchini_Breuer_2010, Rivas_Huelga_2012, de_Vega_Alonso_2017, Zhang_2019, Ali_Huang_Zhang_2020}. It is thus desirable that a general notion of internal energy be able to account for such renormalization. Secondly, a splitting of the non-negligible interaction energy between the two subsystems is of course necessary if one expects to be able to consistently address a separate thermodynamical description of each of them -- with a version of the First Law, for instance --, which is always an assumption, though possibly a tacit one, if one attempts to define internal energy in the first place. 
	
	Therefore, our departure point is to conceive internal energy in the most abstract possible way -- as merely a functional of the ``relevant parameters'' of the system. Usually, such a functional is defined as the average of a certain Hermitian operator instantaneously assigned to the system, often called an ``effective Hamiltonian''. The natural question, for which many different answers have been proposed, is \textit{how} such an observable should be defined \cite{Hossein-Nejad_etal_2015, Alipour_etal_2016, Dou_etal_2018, Valente_etal_2018, Rivas_2020, Colla_Breuer_2021} -- or, more abstractly, which functional \changed{encodes} the internal energy. The purpose of the present article is \textit{not} to introduce an additional definition for internal energy. Instead, we postulate some properties to be satisfied by such a hypothetical, \textit{universal} definition; then, on the basis of the most minimalist implementation of a bipartite quantum universe, we show that they cannot be satisfied simultaneously.
	
	One basic requirement is that a definition of internal energy for open quantum systems should be compatible with the pre-existing notion of internal energy of a \textit{closed} system. In the ``quantum universe'' paradigm, this has one strong implication: the internal energy of the system \textit{and that of the environment} should add up to the average of the total Hamiltonian, $U^S+U^E=\avg{H}$. This property, which from now on we shall simply refer to as \textit{consistency}, is satisfied by construction, for instance, in the approaches of \cite{Ochoa_etal_2016, Dou_etal_2018} 
on the basis of very particular physical models, and also in \cite{Alipour_etal_2016} with the additional correlation term; but is not discussed in \cite{Weimer_etal_2008, Rivas_2020, Colla_Breuer_2021, Valente_etal_2018}. Giving such a consistent definition is equivalent to proposing a systematic way of splitting the average interaction energy $\avg{\hint}$ between the two interacting systems.


	An elementary question is to determine \textit{what variables} the internal energy of an open quantum system should be taken to depend on. One of the most elegant aspects of classical, equilibrium thermodynamics is that it allows for an \textit{a priori} complex object to be described by a small number of variables, 
representing macroscopically observable quantities \cite{Callen_1985, Fermi_1937}.

Then, the so-called (equilibrium) \textit{state} of the system of interest is taken to be completely characterized by the set of these variables.

In this sense, internal energy is a \textit{state function}, which corresponds to the fact that it is the system's \textit{internal} degrees of freedom that determine its energy content, since the energy exchange with an environment, when it exists, is negligibly small. 

In contrast, in quantum thermodynamics, the system of interest is ``simple'' in nature -- it contains few degrees of freedom -- but is subject to non-equilibrium dynamics, under non-negligible energy exchange with its environment, and non-classically correlated to the latter, in general. 
Would the system of interest be an isolated system, quantum mechanics would establish its internal energy in a well-defined way as the expectation value of the system's Hamiltonian ($U:= \avg{H} = \tr{\rho H}$, where $\rho$ is the system's density operator). Worthy of note is the fact that one can cast the internal energy of an isolated quantum system as a functional of solely the density operator and its first time-derivative, \ie{}, $U = U(\rho, \dot{\rho})$ (see Appendix \ref{sec:app:closed_system}). And hence not requiring the knowledge of the Hermitian operator associated with the energy observable if the time-series of $\rho$, and consequently $\dot{\rho}$, is known by any means, \eg{}, state tomography. Here, we take such a result observed in isolated systems to put forward the minimalist hypothesis that the internal energy of an open quantum system should follow (at least) the functional dependence $U = U(\varrho^S, \dot{\varrho}^S)$, where $\varrho^S := \trp{E}\rho$ denotes the system's reduced density operator. Observe that under such a view, even without exactly knowing the system's Hamiltonian, an observer could determine changes in the system's internal energy if $\varrho^S$ and $\dot{\varrho}^S$ are known. It is important to reinforce that the minimalist hypothesis raised does not mean that the internal energy is not dependent on the Hamiltonian. Based on what is found for isolated systems, it conjectures that $U$ could be determined from the resources that can be obtained from the \textit{kinematical} knowledge of $\varrho(t)$. If so, then one would have an operational approach to determine $U$ once $\varrho(t)$ is known. In short: not knowing what variables should the internal energy of an open quantum system depend of, we investigate the minimalist possibility, $U = U(\varrho^S, \dot{\varrho}^S)$.

Such a construction would also imply that, if two hypothetical physical situations are such that $S$ appears the same to a \textit{local} observer that can only perform local state tomography (thus recording only $\varrho^S$ as a function of time, much like in the spirit of the ``operational approach'' of open quantum systems \cite{Milz_etal_2017}), then $S$ would be assigned the same internal energy in both cases. Of course, a general definition of internal energy with this property could additionally depend on higher-order derivatives of $\varrho^S$. That interesting feature is a type of \textit{locality}, which in a broader sense is often sought in definitions of quantum thermodynamical variables (\eg{} \cite{Alipour_etal_2016, Rivas_2020, Colla_Breuer_2021}). We know of at least one proposal \cite{Valente_etal_2018} that fits in this category, furthermore with the ``minimalist'' dependence $U=U(\varrho^S,\dot{\varrho}^S)$, and we make a short account of it in what follows.

   One possible approach to defining internal energy for open quantum systems is to conceive the effective Hamiltonian as the instantaneous generator $\tilde{H}(t)$ of the unitary component of the dynamics when an equation of the form $\dot{\varrho}^S(t) = -i\left[\tilde{H}(t),\varrho^S(t)\right] + \mathcal{D}(t)\left\lbrace\varrho^S(t)\right\rbrace$ on the system's density operator $\varrho^S(t)$ is available, which is known to be the case in fairly general contexts.\footnote{See \eg{} \cite{de_Vega_Alonso_2017}, pp. 17-18, and also \cite{Alipour_etal_2020}.} (We set $\hbar=1$, denote time derivatives with a dot ($\dot{\square}$), and adopt the Schrödinger picture.) An advantage of this procedure, which to our knowledge first appeared in \cite{Weimer_etal_2008}, is that it automatically incorporates the effects of the environment, and would recover the well-known Lamb shift in the weak-coupling, Markovian limit \cite{Breuer_Petruccione_2002, de_Vega_Alonso_2017}. In turn, the property here named \textit{consistency} should be shown to hold \textit{a posteriori}. Moreover, this prescription is doubtful insofar as the decomposition of the dynamical equation into unitary and non-unitary parts is highly non-unique, whence the same physical situation might be assigned different values of the amount of exchanged energy. Part of the freedom in defining the unitary component amounts to the obvious invariance of the commutator term alone: $\tilde{H}(t) \mapsto \tilde{H}(t)+\xi(t)$, the operator $\xi(t)$ commuting with $\varrho^S(t)$; in particular, any $\xi(t) = \alpha(t)\mathbb{1}^S$, $\alpha(t)\in\mathbb{R}$, is allowed. This particular arbitrariness can be regarded as being of classical nature, for it has a clear counterpart in the classical Liouville equation.\footnote{This lack of unicity in the classical realm drove an important debate in stochastic thermodynamics \citep{Horowitz_Jarzynski_2008, Peliti_2008b, Vilar_Rubi_2008b}, the conclusion of which relied on the fact that, in that context, \textit{system} and \textit{environment} played explicitly distinct roles with respect to an external agent \cite{Peliti_2008a}, a distinction that is not present in the context of our interest.} In the quantum case, there exists additional freedom since unitary and dissipative parts can be \textit{jointly} transformed without modifying their sum total.\footnote{This issue recently motivated a full-length study \cite{Hayden_Sorce_2022}, which proposes to solve such ambiguity by postulating a particular kind of extremization principle. Their approach was later employed as the basis of a quantum thermodynamical formalism \cite{Colla_Breuer_2021}, which however does not address the question of consistency, highlighted here.}

   Despite the issues just mentioned, let us briefly consider the approach of \cite{Valente_etal_2018}, in which the instantaneous internal energy of the system of interest $S$ is again defined as the average of a Hermitian operator supposed to generate the unitary part of the local dynamics, in the context of a specific model of ``universe'' with a simplified initial condition -- but far from equilibrium, with correlations and strong coupling allowed. So motivated, the identified effective Hamiltonian of the two-level system $S$ essentially amounts to an instantaneous frequency given by $\tilde{\omega}^S(t) = -\im{\left(\dot{\psi}(t)/\psi(t)\right)}$, where $\psi(t)$ is the probability amplitude of excitation of $S$. As a consequence, with this definition, the internal energy of an open system is a functional of its quantum state and its first time derivative: $U^S=U(\varrho^S,\dot{\varrho}^S)$. Moreover, although that work addresses only the quantum thermodynamical treatment of $S$, it is easily verified that the application of the same definition of internal energy for the environment, \textit{under the particular hypotheses} of \cite{Valente_etal_2018}, leads to a relation of \textit{consistency} in the sense defined here. 
      
   Motivated by the overview given above, we adopt the following premises to approach the problem of defining the internal energy of an open quantum system ($S$) which composes a bipartite, closed quantum universe. \textit{(i)} The internal energy of $S$ should be given as a functional of all relevant instantaneous parameters of the Universe; \textit{(ii)} the same rule defining internal energy should apply to the environment $E$, yielding a relation of consistency, $U^S+U^E=\avg{H}$; \textit{(iii)} one should be able to write the internal energy of either system as a function of the system's quantum state and its time derivatives.

   We analyze the consequences of those requirements for the simplest non-trivial implementation of a bipartite quantum universe, namely when both $S$ and $E$ are two-level systems (TLS's) in the absence of any external drive. 
\changed{
The choice of a TLS as the environment, the only significant restriction in our framework, is the counterpart of an approximation-free approach to the space of possible consistent internal energy definitions in our closed universe. Although the endeavour of addressing thermodynamical concepts in such a realm might well be put under question, the field of quantum thermodynamics has seen a noteworthy advance of theoretical proposals dismissing the usual restrictions on the size (dimensionality) of the environment, which should then be suited for a two-TLS universe \cite{Weimer_etal_2008, Hossein-Nejad_etal_2015, Alipour_etal_2022}. If a universal approach to quantum thermodynamics is to exist, then it should apply in this case as well; if not, then one should pursue its limits of validity, and taking the simplest model as a point of departure is one possible approach. Indeed, there are well-known examples in recent years of the two-qubit framework, with no room for a larger system and with correlations playing an important role, being adopted for the experimental validation of quantum thermodynamics predictions, namely in NMR systems \cite{Micadei_etal_2019, Pal_etal_2020, Micadei_etal_2021}.  
}
   
   By studying the underlying mathematical structure of \changed{the mentioned} requirements when formalized in \changed{the context of a closed, two-TLS universe evolving from a pure quantum state}, we reduce the problem of the existence of such a definition of internal energy to the (non-)existence of solutions to a certain linear system. We approach the latter problem numerically and derive our main result: if it is the case that a general rule defining 
internal energy
and meeting all the requirements above exists, then the functional dependency must involve, at least, up to the \textit{second-order} time derivative of the local density operator. 

	Moreover, we will show that the definition used in \cite{Valente_etal_2018} remains ``consistent'' for a two-TLS's universe if the interaction and initial conditions are suitably constrained, but not when they are generalized, which is in accordance with our general result. The study of this counter-example might provide some hints to the search for a generally consistent definition of internal energy, and should also motivate further investigation of that particular setting.

	We emphasize that, in our treatment, ``system'' and ``environment'' are modelled on equal footing; there is no equilibrium or weak-coupling assumptions; and the only restriction on the initial state of the Universe is that it be a pure quantum state, so that correlations not only will develop in general, but may be present from the beginning. We also stress that there is no external, classical driving in our description, but our result may be easily generalized to account for driven dynamics, as will be indicated.

	The structure of the text is as follows. We begin by defining the physical context of our analysis and introducing some definitions to implement the ideas sketched in the previous paragraphs (\S \ref{sec:formalism}). Next, we set up the method and main result of this article (\S \ref{sec:gen_res}), moving forward to the study of the above-mentioned particular counter-example (\S \ref{sec:counter_ex}). Then in \S \ref{sec:conclusions} we discuss our conclusions. 

\section{Scope and formalism}\label{sec:formalism}

\subsection{Physical setup and notation}\label{subsec:formalism_setup}

	As anticipated, our \textit{universe} is a closed, autonomous quantum system composed of two interacting TLS's, henceforth labelled $A$ and $B$ to emphasize the absence of any essential distinction between their roles; each may be regarded as the other's ``environment''. We denote by $\mathcal{H}\sj \cong \mathbb{C}^2$ the Hilbert space of system $j$ ($j = A,B$) and thus $\mathcal{H}:=\mathcal{H}\sa\otimes\mathcal{H}\sb$ is that of the universe. Let $\mathcal{L}(\mathcal{X})$ be the space of linear operators on the Hilbert space $\mathcal{X}$ and $\her(\mathcal{X})$ the subset of Hermitian operators. Each subsystem has a bare Hamiltonian $H\sj\in\her(\mathcal{H})$, whose lowest eigenenergy is assumed to be zero (that should not mean any physical restriction); let $\omega\sj>0$ be the remaining eigenvalue. The ground and excited states of $H\sj$ are denoted $\ket{0\sj}$ and $\ket{1\sj}$, respectively, and thus $H\sj = \omega\sj \kb{1\sj}$. By default we take $\mathcal{N}\sj:=\left(\ket{0\sj}, \ket{1\sj}\right)$ as a basis for $\mathcal{H}\sj$, and for $\mathcal{H}$ the corresponding induced basis $\mathcal{N}=\set{\ket{k}}_{k=0,\dots,3}$, labelled by usual binary notation: $\ket{0}:=\ket{00}:=\ket{0\sa}\otimes\ket{0\sb}$, $\ket{1}:=\ket{01}$ and so on. For convenience we also introduce the set of Pauli operators built over $\mathcal{N}\sj$, defined as below:
	
	\begin{subequations}
	\begin{eqnarray}
		\sigma_x\sj := +1\kbb{0\sj}{1\sj} +1\kbb{1\sj}{0\sj}, \\
		\sigma_y\sj := +i\kbb{0\sj}{1\sj} -i\kbb{1\sj}{0\sj}, \\
		\sigma_z\sj := -1\kb{0\sj} +1\kb{1\sj}.
	\end{eqnarray}
	\end{subequations}
		
	Please note that we have ordered $\mathcal{N}\sj$ with the ground state first and defined $\left\lbrace\sigma\sj_k\right\rbrace$ so that $\sigma_z\sj$ assigns the positive eigenvalue to the excited state. 

	The set $\mathcal{P}\sj = \left\lbrace\mathbb{1}\sj,\left(\sigma_k\sj\right)\right\rbrace$ is a basis for $\mathcal{L}(\mathcal{H}\sj)$
and thus the tensor products of $\mathcal{P}\sa$ and $\mathcal{P}\sb$ give a basis for $\mathcal{H}$.
	
	Next we introduce the interaction Hamiltonian, $\hint$, the only restriction on which being that it must effectively represent a (correlation-creating) \textit{interaction}, \ie{}, it should not act (or have any component acting) as an identity on either $\mathcal{H}\sj$. Equivalently, its decomposition in products of $\mathcal{P}\sa$ and $\mathcal{P}\sb$ involves only the $\sigma_k\sj$ and not the identities:
	
	\begin{equation}
		\hint = \sum_{j,k=x,y,z} h_{jk}\sigma\sa _j \otimes \sigma\sb _k,
	\end{equation}
with $h_{jk}\in\mathbb{R}$. We emphasize that the universe is autonomous, \ie{}, $H\sa,H\sb,\hint$ are constant in time.
	
	The universe is supposed to be in a pure state at the starting time $t=0$ and therefore for every $t>0$; its state vector is written 
	
	\begin{equation}
		\ket{\psi(t)}=\sum_{k=0}^3\psi_k(t)\ket{k}, 
	\end{equation}		
and we represent the components $\psi_k$ in polar form,

	\begin{equation}
		\psi_k(t) = R_k(t)e^{i\theta_k(t)}.
	\end{equation}

	The state of each subsystem at any time is described by its density operator, $\varrho\sa = \tr_B\rho, \varrho\sb=\tr_A\rho$, where $\rho:=\kb{\psi}$ is the universe's density operator (time dependencies omitted for clarity). The universe dynamics is of course given by the Schrödinger equation
	
	\begin{equation}\label{eq:formalism:eom}
		\ket{\dot{\psi}} = -iH\ket{\psi}
	\end{equation}		
where

	\begin{equation}
		H = H\sa + H\sb + \hint 
	\end{equation}
stands for the total (universe) Hamiltonian, and the local dynamics are determined by partial trace as indicated above. 

\subsection{Definitions}\label{subsec:formalism:definitions}

	Within the physical setup defined in \S \ref{subsec:formalism_setup}, we will address the problem of defining a functional quantifying the internal energy of the two open systems $A$ and $B$. As anticipated in \S \ref{sec:intro}, we will show that, if such a functional depends only on the local states and their derivatives, and is consistent with the well-known energy of the ``universe'', given by $H$, then it must involve at least up to second-order time derivatives. This will be done in \S\ref{sec:gen_res}. The remainder of this Section is devoted to formalizing the definitions that will be used to state the referred result rigorously.  

	The first one accounts for the following observation. Given the (by now fixed) setup of a universe composed of two interacting TLS's in a pure global state, two instantaneous ``physical situations'' may be considered distinct if and only if at least one of the following takes place: \textit{(i)} the global state vector differs from one to the other, in the sense that they are not connected by a global phase shift; \textit{(ii)} the universe Hamiltonian differs from one to the other. Because the word \textit{state} already has a conventional meaning, we adopt \textit{configuration} to refer to this larger set of variables.
	
    \begin{definition}\label{def:configuration}
    	The \textbf{configuration} (of the universe) at a given time, in which it is at a global state $\ket{\psi}$, having total Hamiltonian $H$, is the pair
    
        \begin{equation}\label{eq:formalism:configuration_definition}
            \mathbf{X} := \left( \ket{\psi}, H\right).
        \end{equation}
    \end{definition}

	This definition is intended to implement the idea of ``everything that may be known of the universe at a given time''. In particular, it makes a distinction between two allowed instantaneous physical situations that may well coincide in state vector but not in Hamiltonian. 
    
    \begin{remark}
    	If two configurations $\mathbf{X}_1$ and $\mathbf{X}_2$ differ only in global phase, \ie{}, if $H_1=H_2$ and $\ket{\psi_1}=e^{i\phi}\ket{\psi_2},\phi\in\mathbb{R}$, then we identify them: $\mathbf{X}_1=\mathbf{X}_2$.
    \end{remark}
    
    \begin{remark}\label{remark:fixed_basis}
	    The basis $\mathcal{N}$ was defined in \S \ref{subsec:formalism_setup} over the eigenstates of the bare Hamiltonians $H\sa,H\sb$. In this sense, it may also change if the configuration changes. However, if two configurations $\mathbf{X}=(\ket{\psi},H)$ and $\mathbf{X}'=(\ket{\psi'},H')$ are such that every component of $\ket{\psi}$ and $H$ in $\mathcal{N}$ coincides with every component of $\ket{\psi'}$ and $H'$ in $\mathcal{N}'$, there is no way to distinguish between them. Abstractly, since our universe is a closed system, all our description is invariant under unitaries in $\mathcal{H}$. Then, for simplicity, we choose to fix $\mathcal{N}$; \ie{}, we map every physical situation into the same abstract basis $\mathcal{N}\subset\mathcal{H}$, to which, therefore, we henceforth refer as an implicitly defined and fixed object. This also motivates the following definition.
    \end{remark}

    \begin{definition}
    	The \textbf{representation of a configuration} $\mathbf{X}=\left(\ket{\psi},H\right)$ is the ordered set of 19 real numbers 
		
		\begin{equation}\label{eq:formalism:config_rep}
		\begin{split}
            X = \left(
                \left\lbrace R_k \right\rbrace_{k=0,\dots,3};
                \left\lbrace \theta_k \right\rbrace_{k=0,\dots,3};
                \right. \\ \left.
                \oma, \omb;
                \left\lbrace h_{jk}\right\rbrace_{j,k=x,y,z}
            \right),
        \end{split} 
		\end{equation}
determined from $\mathbf{X}$ by the prescription of \S\ref{subsec:formalism_setup}. Namely, $R_k\geqslant 0$, $0\leqslant \theta_k <2\pi$ are the usual complex polar coordinates of $\psi_k$, the $k$-th component of $\ket{\psi}$ in the basis $\mathcal{N}$; $\omj$ is the nonzero eigenvalue of $H\sj$; and $\set{h_{jk}}$ are the Pauli matrix components\footnote{We stress that the Pauli bases $\mathcal{P}\sj$ are built over the number bases $\mathcal{N}\sj$.} of $\hint$.
    \end{definition}

    \begin{remark}\label{remark:identification}
    	It should be clear now that two configurations are identical if and only if their representations differ only by a global phase translation, $\set{\theta_k}\mapsto\set{\theta_k+\phi}$.
    \end{remark}
    
    \begin{remark}\label{remark:config_rep_not_injective}
		As per the Remark above, the representation is not properly a function of the configuration, since the same configuration admits a whole family of representations (differing by global phase translation). This is however not a problem for our purposes. On the other hand, in view of Remark \ref{remark:fixed_basis}, the (non-injective) correspondence $X \mapsto \mathbf{X}$ is well defined.
    \end{remark}	

    \begin{definition}
    	The \textbf{configuration space} (of the universe) is the set $\mathbb{S}$ of all possible configurations $\mathbf{X}$, identified pairwise according to Remark \ref{remark:identification}. 
    \end{definition}

	\begin{remark}\label{remark:18dimhype}
		Clearly, the set of all possible \textit{representations} of configurations is the 18-dimensional hypersurface
		
		\begin{widetext}
		\begin{equation}\label{eq:formalism:config_rep_space}
			S = \left\lbrace
            \left(
                \left\lbrace R_k \right\rbrace_{k=0,\dots,3};
                \left\lbrace \theta_k \right\rbrace_{k=0,\dots,3};
                \oma, \omb;
                \left\lbrace h_{jk}\right\rbrace_{j,k=x,y,z}
            \right)	
            \in\mathbb{R}^{19} : R_k\geqslant 0, \textstyle\sum_kR_k^2=1, 0 \leqslant \theta_k<2\pi, \omj>0
			\right\rbrace.
		\end{equation}
		\end{widetext}
	\end{remark}

	\begin{remark}\label{remark:config_space_17}
		$\mathbb{S}$ is a 17-dimensional manifold, since it is in one-to-one correspondence with $S$ modulo a joint shift of the $\theta_k$.\footnote{More properly, $\mathbb{S}$ is easily seen to be a 17-dimensional manifold by its own construction, since the space of normalized state vectors of $\mathcal{H}\cong \mathbb{C}^4$, identified pairwise when connected by global phase, is the 6-dimensional manifold $\mathbb{C}P^3$ \cite{Ashtekar_1999}, while the Hamiltonian part is trivially isomorphic to $(0,\infty)\times(0,\infty)\times \mathbb{R}^9$.} 
	\end{remark}

	\begin{definition}\label{def:iel}
		An \textbf{internal energy law} (IEL) is a function $\mathcal{E}:\mathbb{S}\to \mathbb{R}^2$. 
	\end{definition}

	\begin{remark}
		Each value in the real ordered pair $\mathcal{E}(\mathbf{X})$ is meant to represent the respective open system's internal energy, in the underlying configuration $\mathbf{X}$ of the universe, according to the law $\mathcal{E}$. We may denote $\mathbf{X} \mapsto \mathcal{E}(\mathbf{X}) = \left(U \sa_\mathcal{E}(\mathbf{X}), U \sb_\mathcal{E}(\mathbf{X})\right)$, or just \eg{} $U\sa_\mathcal{E}$, for shortness, if the configuration is implied. 
	\end{remark}

	\begin{remark}
		As a consequence of our definitions, the result of applying $\mathcal{E}$ to $\mathbf{X}=(\ket{\psi},H)$ cannot depend, for example, on the global phase of $\ket{\psi}$.
	\end{remark}
	
	The definition \ref{def:iel} implements the idea of ``a way to define the internal energy of an open system''. The definition of configuration (\ref{def:configuration}) shows its use here, since it encodes all the variables upon which the internal energy, being an instantaneous notion, could possibly depend; in particular, the configuration determines all the time derivatives of $\ket{\psi}$, since $H$ is a constant. 
	
	In usual quantum thermodynamical constructions, such a functional appears as the usual average of a certain Hermitian operator, either the bare Hamiltonian $H\sj$ \cite{Carrega_etal_2016, Micadei_etal_2019, Ali_Huang_Zhang_2020, Bernardo_2021, Alipour_etal_2022} or some ``renormalized'', effective Hamiltonian $\tilde H\sj$ instantaneously assigned to the system $j$  \cite{Weimer_etal_2008, Hossein-Nejad_etal_2015, Alipour_etal_2016, Ochoa_etal_2016, Dou_etal_2018, Valente_etal_2018, Rivas_2020, Colla_Breuer_2021, Pyharanta_etal_2022}. We stress that this is just a particular way of implementing what we are here defining as an IEL. Conversely, given an IEL, it is trivial to define an observable whose average implements the same rule: $\tilde H \sj (\mathbf{X}) := (\varrho\sj_{11}\omega\sj)^{-1}\cdot U\sj_\mathcal{E}(\mathbf{X}) \cdot H\sj$. Our description is focused on the functional that directly gives the real values $U\sa, U\sb$, which is simpler and suffices to derive our results. 

	The next step is to incorporate the requirement of consistency.
	
	\begin{definition}\label{def:consistency}
		An IEL $\mathcal{E}$ is said to be \textbf{consistent} if, for every $\mathbf{X}\in\mathbb{S}$, the following equation holds:
		
		\begin{equation}\label{eq:formalism:consistency}
			U \sa_\mathcal{E} \left( \mathbf{X} \right)+
			U \sb_\mathcal{E}\left( \mathbf{X} \right)=
			\avg{H}\left( \mathbf{X} \right).
		\end{equation}
	\end{definition}	
	
	Note that $\avg{H}$, as any instantaneous quantity, is indeed a \textit{function} of the configuration: if $\mathbf{X}=(\ket{\psi},H)$, then $
	\avg{H}(\mathbf{X}) := \braket{\psi|H|\psi}	
$.

	It should be emphasized that consistency is a \textit{global} notion, in the sense of the configuration space $\mathbb{S}$. Concretely, then, if an IEL is consistent, it embodies a \textit{universal} way of \textit{consistently} defining internal energy, that is, a rule that respects the quantification of internal energy of the universe regardless of the initial condition and way of coupling between $A$ and $B$.

	At this point we have materialized the first two informal requirements of \S \ref{sec:intro}. The following definitions will account for the third and last one. This step is a little more lengthy, but we may keep in mind that our purpose is just to define the idea of the internal energy ``depending only on the local state and its derivative''.

	\begin{definition}
		The \textbf{1-extended state} of the system $j$ $(= A,B)$ at a given time is the ordered pair of the density operator $\varrho\sj$ and its first time derivative, at that time. We denote 
		
		\begin{equation}
			\boldsymbol{\sigma}\sj_1 := \left( \varrho\sj, \dot{\varrho}\sj \right).
		\end{equation}
	\end{definition}
	
	\begin{remark}\label{remark:sigma_func_x}
		$\boldsymbol{\sigma}_1\sj$ is also a function of $\mathbf{X}$, so we may eventually write $\boldsymbol{\sigma}_1\sj(\mathbf{X})$ (the derivative is given by the equation of motion, \ref{eq:formalism:eom}). Moreover, for shortness, let $\mathbb{\Sigma}_1\sj$ be the set of all physically allowed 1-extended states of $j$; that is to say, all 1-extended states of $j$ that can be obtained from all configurations $\mathbf{X}\in\mathbb{S}$.	
	\end{remark}	

 	We will also need to \textit{represent}	1-extended states by real coordinates. The most convenient choice is to adopt the basis $\mathcal{N}$ (recall Remark \ref{remark:fixed_basis}). 
 	
 	\begin{definition}
 		The \textbf{representation of a 1-extended state} $\boldsymbol{\sigma_1}\sj=(\varrho\sj,\dot{\varrho}\sj)\in\mathbb{\Sigma}_1\sj$ is the set of 6 real numbers

		\begin{equation}
			\sigma\sj_1  = 
			\left( \re{\varrho_{01}\sj}, \im{\varrho_{01}\sj}, \varrho_{11}\sj; 
            \re{\dot{\varrho}_{01}\sj}, \im{\dot{\varrho}_{01}\sj}, \dot{\varrho}_{11}\sj\right)
		\end{equation}
obtained from the matrix representation of $(\varrho\sj,\dot{\varrho}\sj)$ in $\mathcal{N}\sj$.
 	\end{definition}
 	
 	\begin{remark}
 		Of course, every $\boldsymbol{\sigma}_1\sj=(\varrho\sj,\dot{\varrho}\sj)\in\mathbb{\Sigma}_1\sj$ is such that $\varrho\sj,\dot{\varrho}\sj$ are Hermitian and have traces equal to $1$ and $0$, respectively, and therefore the two elements indicated on the representation are sufficient to determine all the remaining matrix elements; the basis $\mathcal{N}\sj$ then suffices to recover the \textit{operators} $\varrho\sj,\dot{\varrho}\sj$.  
 	\end{remark} 
 		
	\begin{remark}\label{remark:ex_state_rep_mappings}
		The representation by itself is a mapping $\boldsymbol{\sigma}_1\sj\mapsto\sigma_1\sj$. This defines implicitly a subset of $\mathbb{R}^6$, namely, the one whose elements may represent 1-extended states of $j$. We denote it $\Sigma_1\sj$. As per the remark above, the inverse mapping $\sigma_1\sj\mapsto\boldsymbol{\sigma}_1\sj$ is well-defined in $\Sigma_1\sj$. Very importantly, here we have a one-to-one correspondence (contrary to the case of $\mathbb{S}$ and $S$, recall Remark \ref{remark:config_rep_not_injective}).
	\end{remark}
	
	\begin{remark}
		It is simple to see that $\Sigma_1\sa=\Sigma_1\sb$; we then indistinctly denote them $\Sigma_1$.
	\end{remark}
	
		It would require additional effort to explicitly characterize the elements of $\Sigma_1$, since the constraint $\varrho \geqslant 0$ is of cumbersome expression in terms of the chosen coordinates and, further, the corresponding constraint on $\dot{\varrho}$ is even more subtle. For our purposes, however, such a characterization is unessential. In particular:

	\begin{remark}\label{remark:ext_state_dimension}
	 The existence of the representation $\boldsymbol{\sigma}_1\sj\mapsto\sigma_1\sj$ implies that the dimensionality of the manifold $\mathbb{\Sigma}_1\sj$ is no greater than 6. 
	\end{remark}

	\begin{definition}\label{def:strong-1l}
		An IEL $\mathcal{E}$ is said to be \textbf{(strongly) 1-local} if the internal energy $U\sj_\mathcal{E}$ can be written as a function of $\varrho\sj$ and its time derivative; that is to say, if there exist two mappings $\hat{\mathcal{E}}\sj: \mathbb{\Sigma}_1\sj \to \mathbb{R}$ such that, for every $\mathbf{X}\in\mathbb{S}$, 
		
		\begin{equation}
			\hat{\mathcal{E}}\sj \left(\boldsymbol{\sigma}_1\sj (\mathbf{X}) \right) = U\sj_\mathcal{E}(\mathbf{X}).
		\end{equation}
	\end{definition}
		
	This definition relates to the idea of a ``rule defining the internal energy of open systems'' (an IEL, Definition \ref{def:iel}) being able to be ``written in terms of local variables'', where \textit{local} means, as in \S\ref{sec:intro}, ``accessible to a measurement apparatus that can only record $\varrho\sj$ as a function of time''. Of course, to account for this idea more properly, one should define an $n$-extended state, with the first $n$ time derivatives of $\varrho\sj$; however, for our purposes, the definitions above suffice: we will investigate the hypothesis that the first derivative is enough. Now rigorously defined, this property is a rather non-trivial one, particularly if one also requires consistency, as we will see shortly (see note after Theorem). The diagram of Fig.  \ref{fig:formalism:strongly_local_diagram} shows the relationships among the three sets and three applications involved in this definition, regarding one of the open systems. 

\ifPacks
	\begin{figure}
	\begin{tikzcd}
		{} & \mathbb{R}  \\
		\mathbb{S} 
			\arrow[ur, "U\sj _\mathcal{E}", bend left=10]
			\arrow[dr, "\boldsymbol{\sigma}\sj_1", bend right=10]		
		& {} \\
		{} & \mathbb{\Sigma}_1\sj
			\arrow[uu, dashed, "\hat{\mathcal{E}}\sj", swap]
	\end{tikzcd}
	\caption{Schematic illustration of the property here named strong 1-locality. Each arrow represents an application between two of the three sets involved: the configuration space, $\mathbb{S}$; the real line, $\mathbb{R}$; and the set of 1-extended states of $j$, $\mathbb{\Sigma}_1\sj$. If the IEL $\mathcal{E}$ is strongly 1-local, then the application $\hat{\mathcal{E}}\sj$ (dashed arrow) exists and is such that $U\sj_\mathcal{E} = \hat{\mathcal{E}}\sj \circ \boldsymbol{\sigma}\sj_1$.}	
	\label{fig:formalism:strongly_local_diagram}
	\end{figure}		
\fi
	
	It cannot be overemphasized that an IEL being strongly 1-local means that it determines  the internal energy of $j$ from sole knowledge of $\varrho\sj, \dot{\varrho}\sj$. It should be clear that the quantities related to the universe Hamiltonian or its components are not conceived as given parameters, but as independent variables (recall Definition \ref{def:configuration}). Then, in our picture, locally assessing the interaction parameters $h_{jk}$, or even the bare energy gap $\omj$, means extracting them from the time derivatives of $\varrho\sj$. (Note that an energy measurement in $j$ is \textit{by hypothesis} not expected to give $\omj$!) That is why a ``0-local'' IEL (a function of $\varrho\sj$ alone) could never have the desired physical meaning, whence we investigate the possibility of 1-locality, the ``minimalist'' hypothesis described in \S \ref{sec:intro}.

	Before we move on to the statement of the main problem to be pursued in this article, we make an additional definition. 
We are not interested in its physical significance, but merely in its logical relationship with the previous one.
	
	\begin{definition}\label{def:weak-1l}
		An IEL $\mathcal{E}$ is said to be \textbf{weakly 1-local} if the internal energy $U\sj_\mathcal{E}$ can be written as a function of $\varrho\sa, \varrho\sb$, and their first time derivatives; that is to say, if there exist two mappings $\hat{\mathcal{E}}\sj: \mathbb{\Sigma}_1\sa \times \mathbb{\Sigma}_1\sb \to \mathbb{R}$ such that, for every $\mathbf{X}\in\mathbb{S}$, 
		
		\begin{equation}\label{eq:formalism:weak-1l}
			\hat{\mathcal{E}}\sj\left(\boldsymbol{\sigma}_1\sa (\mathbf{X}),\boldsymbol{\sigma}_1\sb (\mathbf{X}) \right) = U\sj_\mathcal{E}(\mathbf{X}).
		\end{equation}
	\end{definition}
	
	\begin{remark}
		If an IEL is strongly 1-local, it is weakly 1-local as well.
	\end{remark}
	
\section{General constraint}\label{sec:gen_res}

\subsection{Statement of the problem}

	Put in terms of the definitions just given (\S \ref{subsec:formalism:definitions}), what we have highlighted in \S \ref{sec:intro} about the proposal of \cite{Valente_etal_2018} is that, if worked out, it suggests the possibility of an IEL being consistent and strongly 1-local -- \ie{}, it suggests that one could set up a consistent rule defining internal energy as a function of the open system's quantum state and its first time derivative. Nevertheless, as we will show in \S\ref{sec:counter_ex}, the particular IEL employed there (or, to be precise, its generalized counterpart in the case of a 2-TLS universe) is not consistent in our sense (Definition \ref{def:consistency}), since it yields a consistency relation (Equation \ref{eq:formalism:consistency}) only for a particular, zero-measure class of configurations, corresponding to the specific model interaction and initial conditions analyzed in that work. 
	
	This observation motivates the following question: if not the one employed in \cite{Valente_etal_2018}, is there \textit{some other rule} with \textit{those two simple properties}? As we anticipated in \S \ref{sec:intro}, the answer is negative. Indeed, even a rule that depends on the local states and first derivatives of \textit{both} open systems to define the local energy of \textit{each} of them cannot be truly consistent. That is the main result of the present work, to be achieved in this Section. Formally, we have the following statement.
	
	\begin{prob}\label{prob:strong} 
		\textit{(Strong.)} Find an IEL that is simultaneously \textit{consistent} (Definition \ref{def:consistency}) and \textit{strongly 1-local} (Definition \ref{def:strong-1l}).
	\end{prob}

	Incidentally, our method will show that even a weaker version of this problem is not solvable.
	
	\begin{prob}\label{prob:weak}
		\textit{(Weak.)} Find an IEL that is simultaneously \textit{consistent} (Definition \ref{def:consistency}) and \textit{weakly 1-local} (Definition \ref{def:weak-1l}).
	\end{prob}

\subsection{Method and result}

\subsubsection{Mathematical structure}
	
	To show that Problem \ref{prob:weak} is unsolvable, we begin by making the following observation.
		
	\begin{proposition}\label{prop:if_solvable}
		If Problem \ref{prob:weak} is solvable, then there exists an application $\mathcal{G} : \mathbb{\Sigma}_1\sa \times \mathbb{\Sigma}_1\sb \to \mathbb{R}$ such that, for every $\mathbf{X}\in\mathbb{S}$,
		
		\begin{equation}\label{eq:gen_res:if_solvable}
			\mathcal{G}\left(\boldsymbol{\sigma}_1\sa (\mathbf{X}),\boldsymbol{\sigma}_1\sb (\mathbf{X}) \right) = \avg{H}(\mathbf{X}).
		\end{equation}
	\end{proposition}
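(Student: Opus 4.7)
The plan is essentially to unpack the two defining properties of a solution to Problem \ref{prob:weak} and combine them directly. Suppose such a solution $\mathcal{E}$ exists. By the assumed weak 1-locality (Definition \ref{def:weak-1l}), there are two mappings $\hat{\mathcal{E}}^A, \hat{\mathcal{E}}^B : \mathbb{\Sigma}_1^A \times \mathbb{\Sigma}_1^B \to \mathbb{R}$ satisfying equation \eqref{eq:formalism:weak-1l} for every $\mathbf{X}\in\mathbb{S}$. My construction of $\mathcal{G}$ is then the obvious candidate, namely the pointwise sum
\begin{equation}
\mathcal{G}(\boldsymbol{\sigma}_1^A,\boldsymbol{\sigma}_1^B) := \hat{\mathcal{E}}^A(\boldsymbol{\sigma}_1^A,\boldsymbol{\sigma}_1^B) + \hat{\mathcal{E}}^B(\boldsymbol{\sigma}_1^A,\boldsymbol{\sigma}_1^B),
\end{equation}
which is clearly a well-defined function on $\mathbb{\Sigma}_1^A \times \mathbb{\Sigma}_1^B$.

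Next, I would verify that this $\mathcal{G}$ indeed satisfies equation \eqref{eq:gen_res:if_solvable}. For any $\mathbf{X}\in\mathbb{S}$, evaluating $\mathcal{G}$ at $\bigl(\boldsymbol{\sigma}_1^A(\mathbf{X}),\boldsymbol{\sigma}_1^B(\mathbf{X})\bigr)$ yields, by weak 1-locality applied to each term, the sum $U^A_\mathcal{E}(\mathbf{X}) + U^B_\mathcal{E}(\mathbf{X})$. Consistency of $\mathcal{E}$ (Definition \ref{def:consistency}) then immediately identifies this sum with $\langle H \rangle(\mathbf{X})$. This chain of equalities holds for every $\mathbf{X}\in\mathbb{S}$, which is exactly the statement to be proven.

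There is no genuine obstacle here: the proposition is essentially a tautological rephrasing of ``consistent + weakly 1-local'', converting a statement about a pair of local functionals constrained to sum to $\langle H \rangle$ into a statement about the \emph{existence} of a single local functional for $\langle H \rangle$ itself. The only point worth emphasizing is that $\boldsymbol{\sigma}_1^A(\mathbf{X})$ and $\boldsymbol{\sigma}_1^B(\mathbf{X})$ are well-defined functions of the configuration (Remark \ref{remark:sigma_func_x}), so the composition on the left-hand side of \eqref{eq:gen_res:if_solvable} is unambiguous. The real content of the paper's argument will appear in the subsequent step, where one must show that no such $\mathcal{G}$ can actually exist --- that is where the nontrivial work (presumably the linear-system analysis foreshadowed in the text) lies.
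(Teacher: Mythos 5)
Your proof is correct and is essentially identical to the paper's own argument: both define $\mathcal{G}:=\hat{\mathcal{E}}\sa+\hat{\mathcal{E}}\sb$ and combine weak 1-locality with consistency to obtain \eqref{eq:gen_res:if_solvable}. Nothing further is needed.
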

	
	\begin{proof}
		This is nearly trivial. Let the IEL $\mathcal{E}$ be consistent and weakly 1-local (hypothesis). Then, for every $\mathbf{X}\in\mathbb{S}$, applying weak 1-locality \eqref{eq:formalism:weak-1l} to the consistency relation \eqref{eq:formalism:consistency} yields, for specific functions $\hat{\mathcal{E}}\sj$, 				
		
		\begin{equation}\label{eq:gen_res:if_solvable_proof}
		\begin{split}
\hat{\mathcal{E}}\sa\left(\boldsymbol{\sigma}_1\sa (\mathbf{X}),\boldsymbol{\sigma}_1\sb (\mathbf{X}) \right) + 
\hat{\mathcal{E}}\sb\left(\boldsymbol{\sigma}_1\sa (\mathbf{X}),\boldsymbol{\sigma}_1\sb (\mathbf{X}) \right) 
\\
= \avg{H}(\mathbf{X}),
		\end{split}
		\end{equation}
which has the structure of \eqref{eq:gen_res:if_solvable} with $\mathcal{G}:=\hat{\mathcal{E}}\sa +\hat{\mathcal{E}}\sb$.		
		
	\end{proof}
	
	Despite how trivial it may seem, the statement above provides us with a powerful test, insofar as it settles an equality between an \textit{a priori} unknown function of the 1-extended states -- the function $\mathcal{G}$, related to the hypothetical, undetermined IEL $\mathcal{E}$ -- and a well-known function of the configuration -- the average Hamiltonian $\avg{H}$. Intuitively speaking, if we can \textit{change} $\mathbf{X}$ --  ``infinitesimally'', for instance -- in such a way that $\boldsymbol{\sigma}\sa_1(\mathbf{X}),\boldsymbol{\sigma}\sb_1(\mathbf{X})$ \textit{remain unchanged}, but at the same time $\avg{H}(\mathbf{X})$ \textit{changes}, then we can be sure that $\avg{H}$ is not ``a function'' of $(\boldsymbol{\sigma}_1\sa,\boldsymbol{\sigma}_1\sb)$ alone -- that is, a $\mathcal{G}$ as in \eqref{eq:gen_res:if_solvable} cannot exist and, therefore, Problem \ref{prob:weak} must be unsolvable. Essentially, this is the content of our method. The proposition and theorem that follow merely formalize this idea.
	
    \begin{proposition}\label{prop:lin_system_method}
    	Let $U\subset\mathbb{R}^n$ be an open region and $f:U\to\mathbb{R}$, $g:U\to\mathbb{R}^m$ be differentiable maps. Given $x_0\in U$, consider the linear system

        \begin{equation}
            \left\lbrace
            \begin{aligned}
                \Df{x_0} g \cdot \df x &= 0 \\
                \Df{x_0} f \cdot \df x &= \delta f
            \end{aligned}
            \right.
        \end{equation}
in the variable $\df x \in\mathbb{R}^n$, where $0\neq \delta f \in\mathbb{R}$, and $\Df{p}F$ denotes the derivative of $F$ at $p$ (Jacobian matrix). Explicitly, 
    
        \begin{equation}\label{eq:gen_res:lin_system_explicit}
            \left\lbrace
            \begin{aligned}
                \Df{x_0} g_1 \cdot \df x &= 0 \\
                \Df{x_0} g_2 \cdot \df x &= 0 \\
                \vdots \\
                \Df{x_0} g_m \cdot \df x &= 0 \\
                \Df{x_0} f \cdot \df x &= \delta f
            \end{aligned}
            \right.
        \end{equation}
is a system of $m+1$ equations in the $n$ variables $\df x = \left(\df x_1, \hdots, \df x_n\right)\in\mathbb{R}^n$. Under these conditions, if there exists an $\hat{f}:g(U)\subset\mathbb{R}^m\to\mathbb{R}$ such that $\hat{f}(g(x)) = f(x)$ for every $x\in U$, then, for any $x_0\in U$, the system above is inconsistent.
    \end{proposition}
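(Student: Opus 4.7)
The plan is to argue by contradiction: if the linear system at some $x_0 \in U$ has a solution $\df x$, then the factorization $f = \hat{f} \circ g$ cannot hold throughout $U$. The core tool is the chain rule, applied at $x_0$ along the direction $\df x$.

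Concretely, I would fix $x_0 \in U$ and suppose that some $\df x \in \mathbb{R}^n$ satisfies $\Df{x_0} g \cdot \df x = 0$ and $\Df{x_0} f \cdot \df x = \delta f \neq 0$. Since $U$ is open, the curve $\gamma(t) := x_0 + t\,\df x$ stays in $U$ for $|t|$ small. Differentiability of $f$ at $x_0$ gives $f(\gamma(t)) - f(x_0) = t\,\delta f + o(t)$, while differentiability of $g$ at $x_0$ combined with $\Df{x_0} g \cdot \df x = 0$ yields $g(\gamma(t)) - g(x_0) = o(t)$. Invoking $f = \hat{f} \circ g$, the first relation rewrites as $\hat{f}(g(\gamma(t))) - \hat{f}(g(x_0)) = t\,\delta f + o(t)$. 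If $\hat{f}$ is differentiable at $g(x_0)$, that same expression equals $\Df{g(x_0)} \hat{f} \cdot [g(\gamma(t)) - g(x_0)] + o(|g(\gamma(t))-g(x_0)|) = o(t)$, contradicting $\delta f \neq 0$. Equivalently and more compactly, the chain rule gives $\Df{x_0} f = \Df{g(x_0)} \hat{f} \cdot \Df{x_0} g$, so that applying both sides to $\df x$ yields $\delta f = \Df{g(x_0)} \hat{f} \cdot 0 = 0$, the desired contradiction.

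The main technical obstacle is that the hypothesis asserts only the \emph{existence} of $\hat{f}$, not its regularity, whereas the chain-rule step demands that $\hat{f}$ be differentiable at $g(x_0)$. In the intended application (Proposition \ref{prop:if_solvable}), $\hat{f}$ is built from the IEL-derived maps $\hat{\mathcal{E}}\sj$, for which differentiability is a natural physical expectation; in that context the argument goes through untouched. I would therefore either promote differentiability of $\hat{f}$ to a standing assumption or state it explicitly in the proposition, after which the proof collapses to the one-line chain-rule computation above.
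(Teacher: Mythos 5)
Your proof is correct and is essentially the paper's own argument: the paper's Appendix~A proof is exactly your compact chain-rule computation, namely that a solution $\df \bar{x}$ would give $\Df{x_0}(\hat{f}\circ g)\cdot \df \bar{x} = \Df{g(x_0)}\hat{f}\cdot\left(\Df{x_0}g\cdot \df \bar{x}\right) = 0$, contradicting $\Df{x_0}f\cdot \df \bar{x}=\delta f\neq 0$. Your caveat about the regularity of $\hat{f}$ is well taken and matches an imprecision in the paper itself, whose proof quietly takes $\hat{f}$ to be smooth even though the proposition only asserts its existence (without some differentiability of $\hat{f}$ the claim can fail, e.g.\ $f(x)=x$, $g(x)=x^{3}$, $\hat{f}(u)=u^{1/3}$ at $x_0=0$), so stating that hypothesis explicitly, as you propose, is the right fix.
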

    
	\begin{remarkprop}
	This proposition is of particular interest when $1<m<n$, so that $g$ is typically not invertible and the existence of an $\hat{f}$ as above is nontrivial.
	\end{remarkprop}
    
    \begin{remarkprop}\label{remarkprop:test_df_1}
    Provided $\delta f \neq 0$, its value is clearly irrelevant for the question above, so we may set $\delta f = 1$ for simplicity.
    \end{remarkprop}
    
    \begin{proof}
		See Appendix \ref{sec:app:proof_method}.
		
    \end{proof}
    
    Finally, we use Propositions \ref{prop:if_solvable} and \ref{prop:lin_system_method} together to obtain a verifiable signature of the nonexistence of an IEL satisfying Problem \ref{prob:weak}. 
	
	\begin{theorem*}\label{prop:lin_syst_applied}
		If Problem \ref{prob:weak} is solvable, then the linear system

        \begin{equation}\label{eq:gen_result:system}
            \left\lbrace
            \begin{aligned}
                \Df{X_0} \sigma\sa_1 \cdot \df X &= 0 \\
                \Df{X_0} \sigma\sb_1 \cdot \df X &= 0\\
                \Df{X_0} \left( \sum_{k=0}^3 R_k^2 \right) \cdot \df X  &= 0\\
                \Df{X_0} \avg{H} \cdot \df X &= \delta E  \neq 0\\
            \end{aligned}
            \right.
        \end{equation}
in the 19 variables 

        \begin{equation}
        \begin{split}
            \df X = \left(
                \left\lbrace \df R_k \right\rbrace_{k=0,\dots,3};
                \left\lbrace \df \theta_k \right\rbrace_{k=0,\dots,3};
                \right. \\ \left.
                \df \oma, \df \omb; 
                \left\lbrace \df h_{jk}\right\rbrace_{j,k=x,y,z}
            \right) \in \mathbb{R}^{19}
		\end{split}        
        \end{equation}
is inconsistent at every configuration representation $X_0\in \int(S)$, where $\int(S)$ is the interior of $S$ (Equation \ref{eq:formalism:config_rep_space}, except for points with some $R_k=0$ or $\theta_k=0$). 
	\end{theorem*}

	\begin{proof}
	
		The statement is simply what results from applying Propositions \ref{prop:if_solvable} and \ref{prop:lin_system_method} to our context of interest. Although it is quite intuitive, a formal proof is worthwhile. We defer it to Appendix \ref{sec:app:proof_applied}.
		
	\end{proof}

	\begin{remarkthm*}\label{remarkprop:probably_not_solvable}
		\eqref{eq:gen_result:system} is a system of 14 equations in 19 variables; ordinarily such a system is expected to have infinite solutions. At the present level of abstraction, therefore, it has become clear that Problem \ref{prob:weak} is \textit{most likely impossible} to solve; in other words, we can already sense that \textit{being consistent} is an extremely \textit{strong} requirement for a \textit{weakly 1-local} IEL, and vice-versa. Of course, this is reminiscent of the fact that the manifold $\mathbb{\Sigma}_1\sa\times\mathbb{\Sigma}_1\sb$ has (no more than) 12 dimensions (Remark \ref{remark:ext_state_dimension}), whereas the configuration space $\mathbf{X}$ has 17 (Remark \ref{remark:config_space_17}); one should not expect to be able to ``recover'' the configuration given only the pair of 1-extended states and thus, in principle, nor to generally recover a particular functional of the former. 
		
		Of course, however, some ``hidden similarity'' between the structures of the functions $\sigma_1\sj(X)$ and $\avg{H}(X)$ could result in the opposite, counter-intuitive case happening, \ie{}, the last row of the coefficient matrix of \eqref{eq:gen_result:system} could turn out to be a linear combination of the preceding ones, resulting in an inconsistent system -- and that is why we go further towards ruling out such possibility. Since the number of variables is too large to be  worked out analytically or even by symbolic computation, we resort to a numerical approach.
	\end{remarkthm*}

\subsubsection{Numerical method}
	
	With the result above, we find ourselves in a position to numerically test whether a solution to Problem \ref{prob:weak} may exist. Strictly speaking, that could be achieved by merely choosing \textit{one} point $X_0\in\int S$, even arbitrarily; computing the matrix elements of the linear system \eqref{eq:gen_result:system}; and showing that it has at least \textit{one} solution. That being done, we would be sure that Problem \ref{prob:weak} is unsolvable, \ie{}, that an IEL cannot be consistent and weakly 1-local. More precisely, we would be finding that the referred properties could not be simultaneously satisfied by any single IEL in the entire ``bulk'' of $\mathbb{S}$, which by itself would indeed rule out those two properties as defined in \S \ref{subsec:formalism:definitions}, insofar as they refer to ``global'' relationships (recall Definitions \ref{def:consistency} and \ref{def:weak-1l}). Nevertheless, it is quite simple to go further: if we can generate an appreciable number of well-distributed points $X_0$ across $\int S$ and show that the system \eqref{eq:gen_result:system} is solvable in \textit{all} of them, we will be finding strong evidence that even an IEL satisfying those two properties ``locally'', in \textit{``bulk'' regions} (\ie{}, open, thence nonzero-measure, subsets) of $\mathbb{S}$, cannot exist. 
	
	We then proceed as sketched above. Since we are not to systematically discuss the choice of the probability distribution in $\int S$, nor which number of generated points should be considered ``large'', we shall keep the latest, strongest conclusion anticipated in the previous paragraph in the level of a ``strong evidence'', whereas the first one is clearly more rigorous. Our numerical routine, implemented via a Python script, consists essentially of the following steps:
	
	\begin{enumerate}[label=\arabic*)]
		\item Randomly generate a point $X_0 \in \int S$ by drawing on a uniform distribution on the respective range of each of the 19 real coordinates (Equation \ref{eq:formalism:config_rep_space}); 
		\item Build the matrix \texttt{M} of the system \eqref{eq:gen_result:system} by numerically differentiating $\sigma_1\sj, \sum_kR_k^2$, and $\avg{H}$, as functions of $X$, and evaluating the results at $X_0$; 
		\item Call a least-squares routine to numerically find a ``candidate'' particular solution \texttt{dX} to the system \texttt{M.dX = b}, where \texttt{b} is a 19-list filled with zeros except for the last entry, which is set to one (recall Prop. \ref{prop:lin_system_method}, Note \ref{remarkprop:test_df_1}); 
		\item Evaluate \texttt{M.dX - b} and check whether its Euclidian norm is zero to within a certain absolute threshold;
		\item Repeat until the desired amount of points is achieved.
	\end{enumerate}

	Two observations regarding the procedure above should be made:
	
	\begin{enumerate}[label=\alph*.]
		\item In principle, the coordinates $\omega\sj$, $h_{jk}$ should be drawn from unbounded intervals. Although this is by no means impossible, we choose to keep the simplicity of using uniform distributions on, respectively, the intervals $(0,1)$ and $(-1,1)$. This should not represent any loss of generality, since any choice of those 11 energy values within their original ranges should be physically indistinguishable, at least insofar as only the properties here investigated are concerned, from another one, obtained from the first by ``normalizing'' by its largest absolute value. In short, only the ratios of energies should be relevant and therefore we can sample only in the bounded intervals just defined. 
		\item The routine involves two computational parameters, in principle arbitrary: \textit{(i)} the accuracy $h$ of numerical differentiation, and \textit{(ii)} the threshold under which the remainder \texttt{M.dX - b} is considered zero.  Because we employ two-point central-difference numerical derivative and computations are performed with double floating point precision, the estimated optimal choice for $h$ is about $10^{-6}$, which we adopt.\footnote{See \eg{} \cite{RH_Landau_etal_2015}, \S 5.5.} With this choice, the precision in the matrix elements of \texttt{M} should be expected to be no better than about $10^{-13}$,\footnote{\textit{Id.}} so that we choose a threshold $10^{-12}$ for the norm of the remainder.
	\end{enumerate}		
		
\subsubsection{Result}

	Under the settings above, with a sample of $5\times 10^{3}$ points $X_0$, we found that the system always has a solution. Even shrinking the threshold to $10^{-13}$, close to machine precision, the result remains. The Jupyter notebook is available upon request.
	
\subsubsection{Summary}\label{subsec:result_summary}

	In this Section, we have proposed the problem of the existence of an internal energy law (IEL) satisfying simultaneously the (global) requirements of consistency and weak 1-locality, all defined in \S \ref{subsec:formalism:definitions}. We then showed that the existence of one such IEL would imply that a certain linear system (Equation \ref{eq:gen_result:system}), with 5 variables more than equations, be \textit{unsolvable} at every configuration representation $X_0\in \int S$. We then tested such hypothesis numerically, showing that the system consistently turns out to be solvable, to the best accuracy attainable with standard machine precision, for $5\times 10^{3}$ randomly sampled representations. This result further suggests that even an IEL satisfying locally the requirements of consistency and weak 1-locality in any open (and, thus, nonzero-measure) subset of $\mathbb{S}$ must not exist. 
	
	As already pointed at the beginning of this Section, though, we do know of an instance of an IEL satisfying the referred properties ``locally'', under a very particular physical setting, which turns out to represent a zero-measure subset of the configuration space. The next Section is devoted to this counter-example.

\subsubsection{Addendum: driven dynamics}

	All our theoretical framework so far has been built upon the hypothesis of an autonomous universe, \ie{}, the Hamiltonian $H$ was assumed time-independent, which is justified by the point of view adopted since \S \ref{sec:intro}: we are interested in the energy exchange \textit{between quantum systems}; or, equivalently, we recognize that, at the most elementary level, all relevant physical entities should be described quantum-mechanically. Nevertheless, it is a simple task to generalize the formalism to account for classically driven systems, \ie{}, time-dependent Hamiltonians. To avoid complications, though, one must suppose that the individual Hamiltonians $H\sj$ vary only in their energy gap $\omega\sj$, \ie{}, their eigenvectors should still be assumed constant. With this sole restriction, our result, that a consistent IEL cannot be weakly 1-local, is easily seen to remain valid. Although the definition of configuration as made here is of more limited use in the time-dependent case, insofar as the time derivatives of $H$ would need to be included if higher-order derivatives of the $\varrho\sj$ were concerned, such inclusion is clearly immaterial if one is to derive a result on 1-local IELs, and then all steps through the Theorem can be taken identically.

\section{A counter-example}\label{sec:counter_ex}

	The definition of internal energy found in \cite{Valente_etal_2018} is strongly 1-local in our sense (Def. \ref{def:strong-1l}). The study concerns a particular physical model, of a TLS coupled to an electromagnetic field via exchange interaction, with a particular initial condition, of a pure universe state restricted to the zero- and one-excitation subspaces; under these special conditions, a relation that we would name consistency (Def. \ref{def:consistency}) can be derived.

	In the present work, the universe comprises \textit{two} TLS's, each playing the role of an ``environment'' to the other; nevertheless, it is straightforward to ``mimic'' the interaction Hamiltonian, initial state, and effective Hamiltonian of \cite{Valente_etal_2018} in this different setup. This is done in the present Section. We first introduce the energy renormalization law (IEL) analogous to that of \cite{Valente_etal_2018}, motivated by the limiting case of uncoupled systems (\S \ref{subsection:counter_ex:uncoupled}); then we introduce the interaction Hamiltonian (\S \ref{subsec:counter_ex:ham}) and initial condition (\S\ref{subsec:counter_ex:dynamics}), showing that a ``local'' relation of consistency is indeed reproduced here, but that it loses validity if the interaction \textit{or} initial condition is modified (\S \ref{subsec:counter_ex:consistency}). The goal is to illustrate the abstract concepts and definitions involved in our main result, as well as to indicate possible guidelines for further investigation. 

\subsection{Uncoupled systems}\label{subsection:counter_ex:uncoupled}

	We first consider the limiting case of two non-interacting TLS's, $\hint=0$, for which there is no doubt as to what operator represents the physical energy of each system. In this case the equations of motion (\S \ref{subsec:formalism_setup}) give

        \begin{equation}\label{eq:counter_ex:dt_rho_a_uncoup}
            \dot{\varrho}\sa =
            \begin{pmatrix}
                0 & i\oma (\psi_0\psi_2^*+\psi_1\psi_3^*)\\
                * & 0 
            \end{pmatrix},
        \end{equation}
where the term $_{10}$, given by hermiticity, was omitted for clarity. (The analogous equation for $B$ is easily obtained by the symmetry $A \leftrightarrow B, \psi_1\leftrightarrow \psi_2$.) In turn, regardless of $\hint$,

        \begin{equation}\label{eq:counter_ex:rho_a}
            \varrho\sa =
                \begin{pmatrix}
                    \absq{\psi_{0}}+\absq{\psi_{1}} &
                    \psi_0\psi_2^*+\psi_1\psi_3^* \\
                 * & 
                    \absq{\psi_{2}}+\absq{\psi_{3}}
                \end{pmatrix}   .
        \end{equation}
        
	We readily see that each TLS ($j$) displays a very simple relationship between the matrix elements of $\varrho\sj$ and $\dot{\varrho}\sj$, on the one hand, and the bare Hamiltonian $H\sj$, on the other. In fact, Equations \eqref{eq:counter_ex:dt_rho_a_uncoup} and \eqref{eq:counter_ex:rho_a} yield $\dot{\varrho}\sa_{01} = i\oma \varrho_{01}\sa$, which is immediately obtained for $B$ as well, so that
	
        \begin{equation}
            \omj = \im{\left( 
            \frac{\dot{\varrho}\sj_{01}}{\varrho\sj_{01}}
            \right)}.
        \end{equation}
        
	Therefore, the internal energy law given by
	
		\begin{equation}
		\begin{aligned}\label{eq:counter_ex:rc_ham}
            \sigma\sj \mapsto 
            U_{RC}\sj &:= \tr \left(\varrho\sj \tilde{H}_{RC}\sj \right), \\
            \tilde{H}_{RC}\sj &:=
            \tilde{\omega}(\sigma\sj) \kb{1\sj},  \\		
			\tilde{\omega}(\sigma\sj) &:= 
			\im{\left( 
				\frac{\dot{\varrho}\sj_{01}}{\varrho\sj_{01}}
			\right)},
		\end{aligned}
		\end{equation}
is such that, for \textit{non-interacting} systems, 
    
        \begin{equation}\label{eq:counter_ex:rc_uncoupled}
            \tilde{H}_{RC}\sj = H\sj.
        \end{equation}

	We adopt the label \textit{rotating coherence} (RC) to refer to this particular IEL and the corresponding effective Hamiltonians, in reference to the fact that they ``detect'' the transition frequency $\omega\sj$ as the rotation frequency of the phasor $\varrho\sj_{01}$ -- the coherence of the density matrix $\varrho\sj$ in the natural basis $\mathcal{N}\sj$ (\S \ref{subsec:formalism_setup}) -- in the complex plane. 

	The RC internal energy law is the first (and only) concrete instance of an IEL to be considered in this article. We have just seen that it recovers the bare Hamiltonians in the limiting case of no interaction, which is desired. Moreover, it clearly is \textit{strongly 1-local}, by definition (Def. \ref{def:strong-1l}). And, from Equation \eqref{eq:counter_ex:rc_uncoupled}, it follows readily that it satisfies a consistency relation: if $U_{RC}:=U\sa_{RC}+U\sb_{RC}$, then in the no-interaction limit
	
		\begin{equation}\label{eq:counter_ex:consistency_uncoupled}
		\begin{aligned}
			U_{RC} &= 
				\tr \left(\varrho\sa \tilde{H}_{RC}\sa \right)+
				\tr \left(\varrho\sb \tilde{H}_{RC}\sb \right)\\
				&= \tr \left(\varrho\sa H\sa \right)+
				\tr \left(\varrho\sb H\sb \right)\\
				&=\avg{H\sa}+\avg{H\sb}\\
				&=\avg{H}.	
					\hspace{60pt}(\hint=0)				
		\end{aligned}
		\end{equation}

	According to our result (\S \ref{subsec:result_summary}), such a relation cannot hold in general, \ie{}, for arbitrary interaction and initial condition. However, as motivated by \cite{Valente_etal_2018}, where the proposed IEL is essentially analogous, we will check that, for suitable interaction and initial condition, the RC rule still exhibits a non-trivial consistency relation. We can also see the breakdown of consistency under the slightest change in the interaction model or initial condition. This is the matter of the next topic.

\subsection{Excitation-conserving interaction}\label{subsec:counter_ex:excit_cons}

\subsubsection{Hamiltonian}\label{subsec:counter_ex:ham}

    We will address the question of consistency of the RC internal energy under the following interaction Hamiltonian:

        \begin{equation}\label{eq:counter_ex:num_ham}
            H^\text{num.}_{\Lambda,\Delta}
            := \Lambda\sigma_+\sa\sigma_-\sb
            + \Lambda^*\sigma_-\sa\sigma_+\sb
            +\Delta\sigma_z\sa\sigma_z\sb,
        \end{equation}
parameterized by $\Lambda\in\mathbb{C},\Delta\in\mathbb{R}$, where $\sigma_+\sj, \sigma_-\sj=\sigma_+^{(j)\dagger}$ are the usual pseudospin operators, $\sigma_+\sj = \kbb{1\sj}{0\sj}$. By writing the explicit matrix forms, it is not difficult to verify that $H^\text{num.}_{\Lambda,\Delta}$ is the most general interaction Hamiltonian that commutes with the \textit{total number of excitations} operator, $N:=N\sa+N\sb =  \kb{1\sa}\otimes \mathbb{1}\sb +\mathbb{1}\sa\otimes\kb{1\sb} = \left.\text{diag}\left(0,1,1,2\right)\right|_{\mathcal{N}}$. It follows, in particular, that the observable $N$ is conserved under the evolution generated by $H := H\sa+H\sb+H^\text{num.}_{\Lambda,\Delta}$. 

	We shall also consider the particular cases obtained by making $\Delta = 0$ or $\Lambda=0$ in \eqref{eq:counter_ex:num_ham} -- respectively, the \textit{exchange} and \textit{dephasing} interactions,
	
		\begin{align}
            H^\text{ex.}_{\Lambda}
            &:= \Lambda\sigma_+\sa\sigma_-\sb
            + \Lambda^*\sigma_-\sa\sigma_+\sb, \label{eq:counter_ex:ex_ham}
            \\
            H^\text{dep.}_{\Delta}
            &:=
            \Delta\sigma_z\sa\sigma_z\sb. \label{eq:counter_ex:dep_ham}
		\end{align}
		
	The former is analogous to the Jaynes-Cummings model of interaction between a TLS and a harmonic mode, a version of which was adopted in \cite{Valente_etal_2018}; the latter can be thought of as an Ising-like dipole coupling between two spins (in the same direction of the local fields), and we include it in our interaction model for reasons of contrast, to become clear later.

\subsubsection{Dynamics in the absence of double excitation}\label{subsec:counter_ex:dynamics}
	
	To derive our analytical results, we make a restriction analogous to that found in \cite{Valente_etal_2018}: the initial state of the Universe is taken to be $\ket{\psi(0)} = \psi_0(0)\ket{00} + \psi_1(0)\ket{01} + \psi_2(0)\ket{10}$. (We will also show results of numerical simulations for a more general initial state, \S \ref{subsec:counter_ex:consistency}.) The already mentioned symmetry of $H^\text{num.}_{\Lambda,\Delta}$ then implies that, for any $t\geqslant 0$, 

        \begin{equation}\label{eq:counter_ex:dynamics:universe_state}
			\ket{\psi(t)} = \psi_0(t)\ket{00} + \psi_B(t)\ket{01} + \psi_A(t)\ket{10},
        \end{equation}
where we suggestively adopted the notations $\psi_A:=\psi_2,\psi_B:=\psi_1$, since these quantities will give the respective excited populations.
        
	We intend solely to calculate the quantities $U\sa_{RC},U\sb_{RC}$, and $\avg{H}$. The crucial equation is \eqref{eq:counter_ex:rc_ham}; we then need the matrix elements of $\varrho\sj,\dot{\varrho}\sj$, in terms of the $\psi_k$ and the parameters entering $H$.\footnote{This corresponds essentially to writing the 1-extended states as explicit functions of the configuration (\S \ref{subsec:formalism:definitions}).} The former was already given in Equation \eqref{eq:counter_ex:rho_a} -- it is obviously unaffected by the interaction; we just particularize for $\psi_3=0$ (time dependencies suppressed):
	
		\begin{subequations}\label{eq:counter_ex:dynamics:local_states}
        \begin{align}
            \varrho\sa &=
            \begin{pmatrix}
                1-\absq{\psi_A} & \psi_0\psi_A^*\\
                * & \absq{\psi_A}
            \end{pmatrix},
\\
            \varrho\sb &=
            \begin{pmatrix}
                1-\absq{\psi_B} & \psi_0\psi_B^*\\
                * & \absq{\psi_B}
            \end{pmatrix}.
        \end{align}
		\end{subequations}
	
	The calculations leading to the $\dot{\varrho}\sj$ are considerably simplified by the fact that $\psi_3=0$. We defer them to Appendix \ref{sec:app:counter_ex_explicit}; the results are
	
		\begin{subequations}\label{eq:counter_ex:dynamics:local_derivatives}
        \begin{align}
            \dot{\varrho}\sa &= \trp{B}\dot{\rho} = 
            \begin{pmatrix}
                *&
                 i\psi_0\left[
                    \Lambda^* \psi_B^*+
                    (\oma-2\Delta)\psi_A^*
                \right]\\
                *&
                +2\im{\left(\Lambda\psi_A^*\psi_B\right)}
            \end{pmatrix},
\\
            \dot{\varrho}\sb &= \trp{A}\dot{\rho} = 
            \begin{pmatrix}
                *&
                 i\psi_0\left[
                    \Lambda \psi_A^*+
                    (\omb-2\Delta)\psi_B^*
                \right]\\
                *&
                -2\im{\left(\Lambda\psi_A^*\psi_B\right)}
            \end{pmatrix}.
        \end{align}
		\end{subequations}

	The suppressed terms are implicit since $\dot{\varrho}=\dot{\varrho}\hc, \tr \dot{\varrho}=0$. As expected, each equation can be derived from the other by the symmetry $A\leftrightarrow B, \Lambda\leftrightarrow \Lambda^*$.
    
\subsubsection{(In)consistency relations}\label{subsec:counter_ex:consistency}
	
	We now apply Equations \eqref{eq:counter_ex:rc_ham} to this particular case. With \eqref{eq:counter_ex:dynamics:local_states} and \eqref{eq:counter_ex:dynamics:local_derivatives},

        \begin{align}
            \tilde{\omega}(\sigma\sa) &= \im{\left( 
                \frac{
                    i\psi_0\left[
                    \Lambda^* \psi_B^*+
                    (\oma-2\Delta)\psi_A^*
                \right]      
                }{
                    \psi_0\psi_A^*
                }
                \right)}\nonumber \\
                &= \oma - 2\Delta + \re{\left(
                    \Lambda\frac{\psi_B}{\psi_A}
                \right)},
        \end{align}
so that $U\sa_{RC} := \avg{\tilde{H}_{RC}\sa} = \tr \left(\varrho\sa \ef \omega(\sigma\sa) \kb{1\sa} \right) = \absq{\psi_A}\tilde{\omega}(\sigma\sa)$ gives

        \begin{equation}
            U\sa_{RC} = \absq{\psi_A}(\oma - 2\Delta) + \re{\left(
                    \Lambda \psi_B \psi_A^*
                \right)};
        \end{equation}
in turn, ($A\leftrightarrow B, \Lambda\leftrightarrow \Lambda^*$)
    
        \begin{equation}
            U\sb_{RC} =\absq{\psi_B}(\omb - 2\Delta)
            + \re{\left(
                    \Lambda \psi_B \psi_A^*
                \right)}.
        \end{equation}
        
    Therefore, the total internal energy, according to the RC law, results
    
        \begin{equation}\label{eq:counter_ex:consistency:u_rc}
		\begin{aligned}
            U_{RC} &= \absq{\psi_A}\oma 
            + \absq{\psi_B}\omb \\
            &+ 2\re{\left(
                    \Lambda  \psi_A^* \psi_B
                \right)}
            - 2(1-\absq{\psi_0})\Delta,
		\end{aligned}        
        \end{equation}
where we used normalization, $\absq{\psi_0}+\absq{\psi_A}+\absq{\psi_B}=1$. In turn, the average of $H=H\sa+H\sb+H^\text{num.}_{\Lambda,\Delta}$ gives (\textit{cf.} Appendix \ref{sec:app:counter_ex_explicit})

        \begin{equation}\label{eq:counter_ex:consistency:mean_h}
        \begin{aligned}
            \avg{H} &= \absq{\psi_A}\oma 
            + \absq{\psi_B}\omb \\
            &+ 2\re{\left(
                    \Lambda  \psi_A^* \psi_B
                \right)}
            - 2(1-\absq{\psi_0})\Delta
            + \Delta.
        \end{aligned}
        \end{equation}
        
	The two preceding equations then give, for the number-conserving interaction Hamiltonian (Equation \ref{eq:counter_ex:num_ham}), 
        
        \begin{equation}\label{eq:counter_ex:consistency:inconsistency}
            U_{RC} = \avg{H} - \Delta,
        \end{equation}        
where we should have expected $U_{RC} = \avg{H}$ (consistency). Equation \eqref{eq:counter_ex:consistency:inconsistency}, besides generalizing \eqref{eq:counter_ex:consistency_uncoupled}, shows, at once, that:

	\begin{enumerate}[label=\textit{(\roman*)}]
		\item Indeed the rotating-coherence IEL \eqref{eq:counter_ex:rc_ham} gives a ``local consistency'' relation for the exchange interaction (\ref{eq:counter_ex:ex_ham}, or \ref{eq:counter_ex:num_ham} with $\Delta=0$) with an initial condition bound to the $[N = 0] \oplus [N = 1]$ subspace -- which corresponds exactly to ``mimicking'', in our two-TLS universe, the conditions of \cite{Valente_etal_2018} for which such a relation may be found; 
		\item nevertheless, the same relation ceases to hold even if we maintain the special characteristic of no-double-excitation dynamics but switch to the more general Hamiltonian \eqref{eq:counter_ex:num_ham}. 
	\end{enumerate}
		
	With regards to the last point above, we observe that it seems unlikely that one could correct the IEL \eqref{eq:counter_ex:rc_ham} so as to make an additional $\Delta$ come up in the average of $\ef H$, since for an (even weakly) 1-local IEL this parameter should be ``extracted'' from the pair \eqref{eq:counter_ex:dynamics:local_derivatives}, where  it appears ``tied'' to other terms, particularly the $\omega\sj$. 
	
	As a matter of illustration, we show in Fig. \ref{fig:counter_ex:simulations} results of numerical simulation of the discussed quantities for the interaction \eqref{eq:counter_ex:num_ham}, switching from consistency to non-consistency; for completeness, besides reproducing the analytical result of Equation \eqref{eq:counter_ex:consistency:inconsistency}, we also simulated an initial condition with $\psi_3(0)\neq 0$, for which no calculation was performed. 
	
	\begin{figure}                                                                                                                                                                                                                                                                                                                                                                                    	\includegraphics[width=\linewidth]{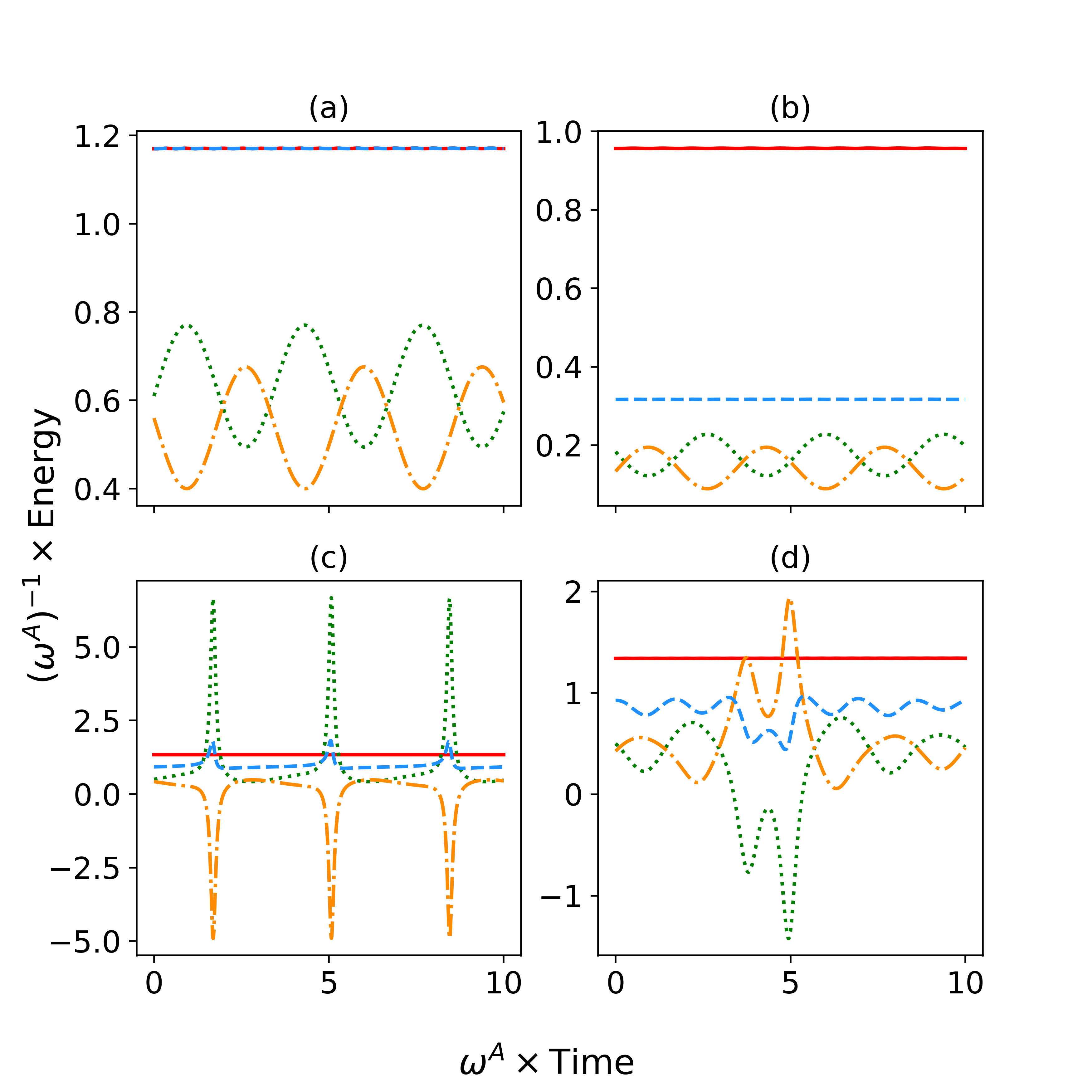}	
	\caption{
		Numerical simulation of the RC law, Eq. \eqref{eq:counter_ex:rc_ham}, showing the quantities $U\sa_{RC}$ (dotted green), $U\sb_{RC}$ (dashed/dotted orange), $U_{RC}$ (dashed blue), and $\avg{H}$ (red). The ``bare'' frequencies are $\oma = 1$ and $\omb = 0.85$; the interaction Hamiltonian is \eqref{eq:counter_ex:num_ham}, with $\Lambda = 0.83+0.41i$ and switching values of $\Delta$: (a) and (c): $\Delta = 0$; (b) and (d): $\Delta = 0.64$. The initial state is $\ket{\psi(0)}\propto \ket{00}+\ket{01}+\ket{10}+\alpha\ket{11}$, with $\alpha=0$ in (a) and (b) and $\alpha=1$ in (c) and (d). The upper half then illustrates Eq. \eqref{eq:counter_ex:consistency:inconsistency}: for $\psi_3(0)=0$, RC yields ``consistency'' if $\Delta = 0$, but an offset $-\Delta$ otherwise. The lower half shows that for more general initial conditions the quantity $U_{RC}$ actually becomes time-dependent, even for autonomous dynamics, $\df H/\df t = 0$. 
	} 
	\label{fig:counter_ex:simulations}
	\end{figure}

\subsubsection{Discussion}

	In this Section, we introduced a strongly 1-local IEL that seems natural to define in the limiting case of uncoupled TLS's. We also showed that it appears to be ``consistent'' if we analyze only the particular case of the exchange interaction \eqref{eq:counter_ex:ex_ham} in the absence of double excitation \eqref{eq:counter_ex:dynamics:universe_state} -- a setting to which we henceforth refer as \textit{the control case} -- but that this ``consistency'' breaks down if either restriction is given up. 
	
	First, it should be stressed that the emergence of a consistency relation for the RC law in the control case does not at any rate contradict the general constraint derived in \S \ref{sec:gen_res}. The referred IEL was actually shown to be \textit{not consistent}, insofar as our notion of consistency (Def. \ref{def:consistency}) is a \textit{global} one: it supposes certain equality to hold for \textit{every configuration} $\mathbf{X}\in\mathbb{S}$, which in practice means \textit{for any interaction Hamiltonian and (pure) initial universe state}. 
	
	We can go further and observe that even the stronger result claimed in \S \ref{subsec:result_summary}, that a weakly 1-local IEL cannot satisfy a consistency relation in \textit{any open region} of $\mathbb{S}$, remains valid: \textit{the control case corresponds to a zero-measure subset of} $\mathbb{S}$ -- the one defined by $R_k=0$ and a particular set of constraints on the $h_{jk}$ --, thus not an open subset, and therefore it is not contradictory that ``local consistency'' can arise for a 1-local IEL in the control case. 

	Notwithstanding the fragility of the ``consistency'' of the RC Hamiltonian, it is noteworthy that such a property is able to be found in a strongly 1-local IEL for a nontrivial, physically meaningful setup, namely the control case. In fact, we have not found a different setting (interaction and initial state) to exhibit such ``singular'' phenomenon, which naturally leads to wondering whether there exists some particular physical property in the control case that makes it ``special'' in this sense. If the answer were positive, one could speculate this particular setup -- or its counterpart in a universe comprised of a TLS in a boson bath, studied in \cite{Valente_etal_2018} -- to be particularly ``appealing'' for addressing quantum thermodynamical questions. 
	
	In trying to figure out what special physical feature of the control case separates it from the general case in which consistency of the rotating-coherence law \eqref{eq:counter_ex:rc_ham} breaks down, we have not as yet been able to find a conclusive answer. 
The following hypothetical explanations were considered and should be discarded:

	\begin{enumerate}[label=\textit{(\alph*)}]
		\item The obvious symmetry $\left[H^\text{ex.}_{\Lambda}, N \right] = 0$, conjugated with the hypothesis $\psi_3(0)=0$, which imply the particular form of the state vector \eqref{eq:counter_ex:dynamics:universe_state}: this feature does not distinguish the control case from its ``extended'' version with $\hint = H^\text{num.}_{\Lambda,\Delta}, \Delta \neq 0$, for which the RC law already loses its ``consistency'', Equation \eqref{eq:counter_ex:consistency:inconsistency};
		\item Some ``hidden'' symmetry/conserved quantity, peculiar to $H^\text{ex.}_{\Lambda}$: actually, it is possible to show, by explicit construction, that if $\mathcal{O} \in \her (\mathcal{H})$ commutes with $H^\text{ex.}_{\Lambda}$, then $\mathcal{O}$ automatically commutes with $H^\text{dep.}_{\Delta}$ as well, and thus again we did not ``isolate'' the control case;
		\item The low dimensionality of the accessible region of $\mathbb{S}$: the constraints $\psi_3=0,\hint=H^\text{ex.}_{\Lambda}$ define an 8-dimensional submanifold of $\mathbb{S}$, which perhaps could be rendering the inversion $\mathbf{X}\mapsto (\boldsymbol{\sigma}_1\sa,\boldsymbol{\sigma}_1\sb)$ possible in this particular case; however, the same dimensionality reduction is achieved with $\hint = H^\text{num.}_{\Lambda,\Delta}$ if for example we impose $\im \Lambda =0$, and thus, again, the control case is not isolated.
	\end{enumerate}
	
	We thus leave the suggestion for future investigation to seek a more physically appealing characterization of what we have defined as the control case, by means, \eg{}, of some informational property, and its connection with the possibility of establishing a strongly 1-local IEL in that particular setting.

\section{Concluding remarks}\label{sec:conclusions}

	Appreciable effort has been devoted in recent years to addressing a thermodynamical treatment of elementary quantum systems, most noticeably those far from equilibrium, strongly coupled and correlated to their environments \cite{Weimer_etal_2008, Hossein-Nejad_etal_2015, Alipour_etal_2016, Rivas_2020, Silva_Angelo_2021, Colla_Breuer_2021, Alipour_etal_2022, Ochoa_etal_2016, Carrega_etal_2016, Valente_etal_2018, Dou_etal_2018, Micadei_etal_2019, Strasberg_2019, Ali_Huang_Zhang_2020, Pyharanta_etal_2022, Bernardo_2021}. A related trend is to leave off the conception of work as energy exchanged with a classical agent, aiming toward the study of energy exchanges within a closed, autonomous ``quantum universe'' \cite{Hossein-Nejad_etal_2015, Alipour_etal_2016, Colla_Breuer_2021, Alipour_etal_2022, Valente_etal_2018, Micadei_etal_2019, Pyharanta_etal_2022}. Those features arguably set up the ultimate scope of quantum thermodynamics.

	In this realm, it is certainly disputable whether or not one should expect to be able to speak of internal energy of an open quantum system. Many authors have been assuming the positive answer, either explicitly, or implicitly through definitions of work and heat. In this broad sense, several distinct notions of internal energy are available in the literature, each based upon a particular set of physical principles and/or conceptual requirements \cite{Weimer_etal_2008, Hossein-Nejad_etal_2015, Alipour_etal_2016, Rivas_2020, Silva_Angelo_2021, Colla_Breuer_2021, Alipour_etal_2022, Valente_etal_2018}.

	Here, we addressed the question of whether a general, ``universal'' definition of internal energy for open quantum systems may be designed. As a first attempt to rigorously approach this issue, we established two basic properties that such a hypothetical definition might be expected to satisfy. The first, most elementary requirement was named \textit{consistency}: in a bipartite universe, the same definition should apply indistinctly to the two parts, in such a way that the sum total of the internal energies is equal to the internal energy of the whole, which is well-defined \textit{a priori}. The second property, accepted provisionally, is that the effective Hamiltonian of each open system, at a given time, would depend only on its quantum state (reduced density operator) and the time derivative of the latter; operationally, that would be a minimalist implementation of internal energy being a local quantity, in the sense of being determined only by the local state dynamics.
	
	As a setup for testing the hypotheses above, we considered a closed, bipartite quantum universe comprising two two-level systems (TLS's), evolving autonomously from a pure quantum state. By developing the mathematical structure behind those requirements and with the aid of numerical computation, we showed that they cannot be simultaneously satisfied in that context. Our result equally applies if in the same context we allow for time-dependent Hamiltonians, provided the individual components vary with time only in eigenvalue but not in eigenstates. We actively intend to depart our method from those approaches relying on results derived for particular interaction models; in fact, given the restriction to two TLS's, we have considered all possible interactions between them.
	
	An honest interpretation of a negative result should account for every underlying hypothesis. So inspired, the consequences of our result might be one or more of the following. \textit{(i)} Internal energy should depend on the second and/or higher derivatives of the density operator. That would be an interesting fact in its own right, particularly because the expressions would have to involve terms of second order either on the global density operator or on the Hamiltonian. \textit{(ii)} Internal energy should depend on something more than the local state and its derivatives, which indeed seems to be the case of many existing proposals \cite{Weimer_etal_2008, Hossein-Nejad_etal_2015, Rivas_2020, Colla_Breuer_2021}. In this case, one should either carefully argue why such a prescription should be considered local, or deal with the consequences of conceiving ``internal'' energy as a non-local attribute.  
As a matter of fact, recently, it has been shown \cite{Malavazi_Brito_2022} that a consistent definition of internal energy can be constructed from the system's reduced matrix knowledge, but that is only well-determined if the system-environment interaction is known, which is certainly a non-local property.
\textit{(iii)} Internal energy is not consistent, \ie{}, in a closed universe, the internal energies of the parts do not add up to that of the whole. Then the idea of \textit{internal energy} for open quantum systems probably should be abandoned, and consequently those of work and heat as well. Or simply \textit{(iv)} neither of the above happens, but the idea of internal energy just is not ``universal'' in the sense that it does not apply for a universe of two TLS's and/or in such a general regime as considered here. Then the task would be to determine ``where'' the ``borders'' come about -- \changed{\ie{}, whether the limiting factor for thermodynamical addressability is small environment, strong interactions, presence of (quantum) correlations, or absence of thermal equilibrium, keeping in mind that, combined or not, all these elements are becoming familiar in modern quantum thermodynamics literature.}

	Our result may also be connected with the notion of local passivity, which is relevant in the study of energy extraction from local subsystems. As put forward by Refs. \cite{Frey_Funo_Hotta_2014,Alhambra_etal_2019}, CP-local passivity is defined as a condition of a bipartite system in which the global system energy cannot be reduced through the application of any local (CPTP) map on one of its parts, a condition which was shown to hold in some relevant situations, as well as to be linked with entanglement. Back to our framework, if an internal energy law (IEL) could be designed satisfying strong locality and consistency, then the total energy of a bipartite system would \textit{always} (\textit{i. e.} for any pure global state) be prone to change by suitable manipulation of the reduced density operators (and their derivatives), which is arguably achievable through CPTP operations and could possibly leave room for energy extraction. Further, our negative result was crucially built on the tacit requirement of universality, meaning that an IEL with the conjectured properties should ``work'' for all possible pure global states, most of which are, of course, entangled. In this sense, our findings might reinforce the notion that entanglement constrains the possibility of local energy extraction (or even addition).

	Besides deriving the general constraint above, we considered a particular counter-example, inspired by the approach of \cite{Valente_etal_2018}, where internal energy is (implicitly) defined as depending only on the system's state and its derivative, and seems to be consistent in our sense, as can be verified easily. We showed that, despite our setup of two TLS's being distinct from that considered in the referred work -- in which one of them should be replaced by a boson bath --, it allows for the interaction model and initial condition of the latter to be ``emulated'', by means of an exchange interaction with no initial amplitude in the two-excitation subspace, which was taken as our ``control case''. Then we have seen that indeed those particular conditions yield an apparently consistent notion of internal energy in the control case, but that such ``consistency'' breaks down upon the slightest generalization of either initial state or interaction Hamiltonian, which is in accordance with our general constraint. In any case, it is noteworthy that those particular setups, either in the ``spin-boson'' or in the ``spin-spin'' case, allow for such particular modes of splitting of the universe energy among the parts.
	
	In synthesis, our simplest message is that, if a presumably universal notion of internal energy for open quantum systems is to be addressed, then it should be shown to be compatible with the quantification of the energy of closed systems -- and that this requirement by itself may impose strong constraints on the structure of the definition. We understand that our method may be generalized to systems of higher dimensionality, which should shed light on the possible implications of the constraint communicated here. We also believe that a deeper inquiry into the particular dynamical and/or informational properties of our so-called control case (or analogous setups) and their connection with thermodynamics may be fruitful.

\begin{acknowledgments}	
	L.R.T.N. acknowledges full financial support from São Paulo Research Foundation (FAPESP), grant \#{2021}/01365-9.  F.B. is supported by the Instituto Nacional de Ciência e Tecnologia de Informação Quântica (CNPq INCT-IQ 465469/2014-0). L.R.T.N. is warmly grateful for important discussions held with his fellows André Malavazi, João Inagaki, Lais Anjos, Matheus Fonseca, and Vitor Sena, throughout the development of this research -- with special thanks to Clara Vidor, Guilherme Zambon and Pedro Alcântara for critical reading of the early versions.
\end{acknowledgments}

\appendix

\section{Internal energy for closed systems: $U = U(\rho, \dot{\rho})$}\label{sec:app:closed_system}

	The time evolution of a closed quantum system is dictated by the von-Neumann equation

	\begin{equation}
		\dot{\rho} = -i\left[ H, \rho\right],
	\end{equation}
where $\rho$ and $H$ are the system's density operator and Hamiltonian, respectively. Observe that the matrix elements of $\rho$ written in the eigenbasis of H are given by

	\begin{equation}\label{eq:closed_system:matrix_elements}
		\dot{\rho}_{nm} = -i\omega_{nm}\rho_{nm},
	\end{equation}
where $\omega_{nm}$ represent the system's spectral energy differences. The system's internal energy is found to be

	\begin{align}\label{eq:closed_system:intermediate}
		U &:= \avg{H} = \tr{\left(\rho H\right)} = \sum_{n}E_n\rho_{nn} \nonumber \\
		&= E_0\rho_{00} + \sum_{n\neq 0}E_n\rho_{nn} =
		E_0 + \sum_{n\neq 0}\omega_{n0}\rho_{nn},
	\end{align} 
where we have used the property $\tr{\rho}=1$ in the last step. Then, as long as $\rho$ does not commute with $H$ $\forall t$, using the relation \eqref{eq:closed_system:matrix_elements} in Equation \eqref{eq:closed_system:intermediate}, the internal energy of a closed quantum system is cast as the following functional:

	\begin{equation}
		U = E_0 + \sum_{n\neq 0}
		\im\left( \frac{\dot{\rho}_{0n}}{\rho_{0n}}\right)		
		\rho_{nn} = U(\rho, \dot{\rho}).
	\end{equation}

\section{Proof of Proposition \ref{prop:lin_system_method}}
\label{sec:app:proof_method}
	
    \hspace{-\parindent}\textit{Preliminaries. --} Let us represent an increment $\df x = (\df x_1,\hdots,\df x_n)^\top\in\mathbb{R}^n$ as a column vector. Let $x_0\in U$. We represent the derivative of $f$ at $x_0$,
    
        \begin{equation}
            \text{D}_{x_0}f := \left(\dpar{f}{x_1}{}(x_0), \hdots, \dpar{f}{x_n}{}(x_0)\right),
        \end{equation}
as a \textit{row} vector, such that the scalar $\df f = \left(\text{D}_{x_0}f\right) \cdot \df x$ is obtained as a matrix product. We adopt this convention consistently. For instance,
    
        \begin{equation}
            \left( \text{D}_{x_0}g\right)_{ij} := \dpar{g_i}{x_j}{}(x_0)
        \end{equation}
is an $m \times n$ matrix, representing a linear map from $\mathbb{R}^n$ to $\mathbb{R}^m$; indeed, the matrix product $\left( \text{D}_{x_0}g\right) \cdot \df x \in \mathbb{R}^m$ is a column vector giving the first-order increment in $g$ upon the displacement $\df x$ from $x_0$. 
    
    The elements above are given in the hypotheses. Now let $\hat{f}: g(U)\subset \mathbb{R}^m\to \mathbb{R}$ be \textit{some} smooth function. Like before, for $u_0\in g(U)$, 
    
        \begin{equation}
            \text{D}_{u_0}\hat{f} =  \left(\dpar{\hat{f}}{u_1}{}(u_0), \hdots, \dpar{\hat{f}}{u_m}{}(u_0)\right)
        \end{equation}
is a row vector.
        
    Consider the composition $\hat{f}\circ g :U\to \mathbb{R}$. Its derivative $\text{D}_{x_0}(\hat{f}\circ g)$ is a third row vector. It relates to the derivatives of $\hat{f}$ and $g$ as follows:
 
        \begin{align*}
            \left( \text{D}_{x_0}(\hat{f}\circ g) \right)_{j} 
            &= \dpar{(\hat{f}\circ g)}{x_j}{}(x_0)\\
            &= \sum_{k=1}^m
            \dpar{\hat{f}}{u_k}{}(g(x_0)) 
            \dpar{g_k}{x_j}{}(x_0)\\
            &= \sum_{k=1}^m
            \left(\text{D}_{g(x_0)}\hat{f}\right)_{k} 
            \left(\text{D}_{x_0}g\right)_{kj},
        \end{align*}
    \ie{}, we have the intuitive matrix relation
        
        \begin{equation}\label{eq:proof_method:eq1}
            \text{D}_{x_0}(\hat{f}\circ g) = \text{D}_{g(x_0)}\hat{f} \cdot \text{D}_{x_0}g.
        \end{equation}
    
    \hspace{-\parindent}\textit{Result}. -- By contradiction, assume that the system \eqref{eq:gen_res:lin_system_explicit} is not \textit{inconsistent in all of} $U$. Then choose an $x_0\in U$ such that it is consistent at $x_0$ and let  $\df \bar{x}\in\mathbb{R}^n$ be a solution of the system at that point. This is exactly to say that 
        
        \begin{equation}\label{eq:proof_method:eq2}
            \left( \text{D}_{x_0}g\right) \cdot \df \bar{x} = 0 \in \mathbb{R}^m
        \end{equation}
    and 
    
        \begin{equation}\label{eq:proof_method:eq3}
            \left(\text{D}_{x_0}f\right)\cdot \df \bar{x} = \delta f \neq 0.
        \end{equation}
        
    Now Equation \eqref{eq:proof_method:eq2} in \eqref{eq:proof_method:eq1} implies
    
        \begin{equation}\label{eq:proof_method:eq4}
            \left[\text{D}_{x_0}(\hat{f}\circ g)\right]\cdot \df \bar{x} = 0.
        \end{equation}

    But recall that $\hat{f}:g(U)\subset\mathbb{R}^m\to\mathbb{R}$ was arbitrary. Then, if we could choose $\hat{f}$ such that $\hat{f}\circ g = f$ around $x_0$ (hypothesis), we would immediately find a contradiction between Equations \eqref{eq:proof_method:eq3} and \eqref{eq:proof_method:eq4}. The conclusion is that $\hat{f}\circ g = f$ cannot be true in a vicinity of $x_0$ and therefore cannot be true in $U$. The claim is now proven.

\section{Proof of Theorem}
\label{sec:app:proof_applied}

		\hspace{-\parindent}\textit{First part. --} The first step is to translate the statement of Proposition \ref{prop:if_solvable} into a relation between functions defined on regions of $\mathbb{R}^n$. It will suffice to give explicit form to some correspondences that have already been defined throughout \S \ref{sec:formalism}.
		 
		First, given a configuration representation $X\in S$, the scalar $\avg{H}\in\mathbb{R}$ is obviously well defined. This defines a function from $S$ to $\mathbb{R}$ which we name $f$. (It is straightforward to write it explicitly, but there is no need to.) 
		
		A pair of represented 1-extended states, $(\sigma_1\sa,\sigma_1\sb)\in\Sigma_1^2\subset \mathbb{R}^{12}$, is also defined for every $X\in S$. To see that explicitly, it is most convenient to resort to the basis $\mathcal{N}$, but it should be clear that the procedure is well-defined (recall Remark \ref{remark:sigma_func_x}).  It defines a function $g:S\to\Sigma_1^2$.
		
		Now, assume Problem \ref{prob:weak} to be solvable. Then, there exists a $\mathcal{G} : \mathbb{\Sigma}_1\sa \times \mathbb{\Sigma}_1\sb \to \mathbb{R}$ such that, for every $\mathbf{X}\in\mathbb{S}$, $\mathcal{G}\left(\boldsymbol{\sigma}_1\sa (\mathbf{X}),\boldsymbol{\sigma}_1\sb (\mathbf{X}) \right) = \avg{H}(\mathbf{X})$ (Proposition \ref{prop:if_solvable}, Equation \ref{eq:gen_res:if_solvable}). We naturally introduce two compositions in order to derive a statement on functions between $S\subset \mathbb{R}^{19}$, $\Sigma_1^2\subset\mathbb{R}^{12}$, and $\mathbb{R}$. First, let $\mathbf{a}:S\to\mathbb{S}$ be the function assigning to every $X\in S$ the configuration that it represents (Remark \ref{remark:config_rep_not_injective}). Let also $
\mathbf{b}:\Sigma_1^2\to\mathbb{\Sigma}_1\sa\times\mathbb{\Sigma}_1\sb 
$ assign to every 12-list $(\sigma_1\sa,\sigma_1\sb)\in\Sigma_1^2$ a pair of 1-extended states, of $A$ and $B$, respectively; this map is well-defined (recall Remark \ref{remark:ex_state_rep_mappings}). When denoting the action of these two, we apply square brackets, for more clarity. Also, define $\hat{f} : \Sigma_1^2\to\mathbb{R}$ as the composition $\hat{f} := \mathcal{G} \circ \mathbf{b}$. This one assigns to every a 12-list of $\Sigma_1^2$ the corresponding average energy $\avg{H}$ (given by the IEL $\mathcal{E}$). 

		Given $X \in S$, we are allowed to apply $g$, obtaining a 12-list representing a pair of 1-extended states, and then apply $\hat{f}$, obtaining a real number with the meaning of average energy. Rigorously, we then have, for every $X\in S$,

		\begin{align}
			\hat{f}\left( g(X) \right) &= 
			\mathcal{G}\left( \mathbf{b}[g(X)] \right) 
			&&\text{ by definition of } \hat{f} \nonumber \\
			&= \mathcal{G}\left( 
				\boldsymbol{\sigma}_1\sa(\mathbf{a}[X]),
				\boldsymbol{\sigma}_1\sb(\mathbf{a}[X]) 
			\right)
			&& \nonumber \\
			&= \avg{H}\left( \mathbf{a}[X]\right) 
			&&\text{ by Eq. \eqref{eq:gen_res:if_solvable} }\nonumber \\ 
			&= f(X)
			&&\text{ by definition of } f.
		\end{align}			
		
		The second passage stems from the fact, clear from the definitions, that $\mathbf{b}\left[g(X)\right] = \left( 
				\boldsymbol{\sigma}_1\sa(\mathbf{a}[X]),
				\boldsymbol{\sigma}_1\sb(\mathbf{a}[X]) 
			\right)$, identically. 
			
		The diagram of Fig. \ref{fig:prop3_proof_compositions} summarizes all the mappings used; the relationships between them should be clear.  

\ifPacks
		\begin{figure}
		\begin{tikzcd}
			X\in S 
				\arrow[mapsto, d, "\mathbf{a}"]
				\arrow[mapsto, ddr, "f", swap, bend left=20, color=red]
				\arrow[mapsto, ddd, "g", bend right=90, swap, color=red]				
			&
			{}
			\\
			\mathbf{X}\in\mathbb{S} 
				\arrow[mapsto, d]			
			&
			{}
			\\
			(\boldsymbol{\sigma}_1\sa,\boldsymbol{\sigma}_1\sb)\in\mathbb{\Sigma}_1\sa\times\mathbb{\Sigma}_1\sb 
				\arrow[mapsto, r, "\mathcal{G}", dashed] 
			&
			\avg{H}\in\mathbb{R}			
			\\
			(\sigma_1\sa,\sigma_1\sb)\in\Sigma_1^2
				\arrow[mapsto, u, "\mathbf{b}"]
				\arrow[mapsto, ur, "\hat{f}", dashed, swap, bend right=10, color=red] 
			&
			{}
		\end{tikzcd}
		\caption{
			Scheme of the mappings involved in the first part of the proof of the Theorem. Red arrows highlight the applications that are of interest in the remainder of the proof, illustrating the relationship between them: $\hat f \circ g = f$. Dashed arrows indicate the applications that only exist due to the hypothesis that the underlying IEL, $\mathcal{E}$, is consistent and strongly 1-local.  		
		}
		\label{fig:prop3_proof_compositions}
		\end{figure} 
\fi
		
		The first part of the proof is by now finished. It may seem that Proposition \ref{prop:lin_system_method} can be immediately applied, but not yet. By a matter of fact, $\int(S)\subset \mathbb{R}^{19}$ is not an open set; it is an embedded hypersurface, having dimensionality 18 (Remark \ref{remark:18dimhype}). This is an essential matter, since a naive application of the test suggested by Proposition \ref{prop:lin_system_method} when the domain $U$ of $f$ is an embedded hypersurface instead of an open region could yield ``false positive'' solutions $\df x$ such that $x_0 + \df x$ does not ``live'' in $U$ (more rigorously, such that $\df x\in\mathbb{R}^n$ does not lie in the tangent space of $U$). By dealing with this problem rigorously we will reach the intuitive conclusion that a proper constraint on the $R_k$ needs to be added ``by hand'', as in \eqref{eq:gen_result:system}. 

		\hspace{-\parindent}\textit{Second part. --} The crucial step is to put $\int(S)$ in one-to-one correspondence with an open set of $\mathbb{R}^{18}$, which is simply achieved by parameterizing the $R_k$ in a way that identically satisfies the constraint $\sum_kR_k^2=1$. This is achieved with hyperspherical coordinates: the set
		
		\begin{equation}
			D:= \left\lbrace
				(r, \alpha, \beta, \gamma) \in\mathbb{R}^4 :
				r\geqslant0; 0\leqslant \alpha, \beta \leqslant \pi; 0\leqslant \gamma < 2\pi	
			\right\rbrace
		\end{equation}					
parameterizes $\set{R_k}=\mathbb{R}^4$ by the prescription $\mathbf{c}:D\to\mathbb{R}^4$ described below:

		\begin{equation}
		\begin{aligned}
			R_1&=r\cos\alpha, \\
			R_2&=r\sin\alpha\cos\beta, \\
			R_3&=r\sin\alpha\sin\beta\cos\gamma, \\
			R_0&=r\sin\alpha\sin\beta\sin\gamma,
		\end{aligned}
		\end{equation}
whose inverse $\mathbf{h}:\mathbb{R}^4\to D$ is uniquely defined except for some pathological points which are not of our interest (analogous to the polar axis $z$ in the case of spherical coordinates in $\mathbb{R}^3$).
	
		Now, by applying $\mathbf{h}$ to the $R_k$ coordinates of $X \in \int S$ (which does not include any such pathological points) we find such set to be in one-to-one correspondence with
		
		\begin{widetext}
		\begin{equation}
			T = \left\lbrace
            \left(
                \alpha,\beta,\gamma;
                \left\lbrace \theta_k \right\rbrace_{k=0,\dots,3};
                \oma, \omb;
                \left\lbrace h_{jk}\right\rbrace_{j,k=x,y,z}
            \right)	
            \in\mathbb{R}^{18} : 
            0 < \alpha, \beta, \gamma < \pi/2;
             0 < \theta_k<2\pi, \omj>0
			\right\rbrace.
		\end{equation}
		\end{widetext}
		
		We suggestively denote the correspondences just built as $\mathbf{\tilde{h}}: \int S \to T$ and $\mathbf{\tilde{c}} = \mathbf{\tilde{h}}^{-1}$.

		$T$ is an open set of $\mathbb{R}^{18}$ which parameterizes $\int S$; it remains to \textit{(a)} apply Proposition 2 directly to $T$, obtaining a statement on a linear system defined on $T$; and finally \textit{(b)} make the inverse movement, so as to build a corresponding statement on a linear system defined on $\int S$, namely \eqref{eq:gen_result:system}. 
		
		The conclusion of the first part was that there exists an $\hat{f}:\Sigma_1^2 \to \mathbb{R}$ such that, for every $X \in S$, $\hat{f}(g(X)) = f(X)$. Now, for every $Y\in T$, $\mathbf{\tilde{c}}(Y)\in S$ and therefore
		
		\begin{align}
			\hat{f}\left( (g\circ \mathbf{\tilde{c}}) (Y)\right) &:=
			\hat{f}\left( g( \mathbf{\tilde{c}}(Y) )\right) \nonumber \\
			&= f(\mathbf{\tilde{c}}(Y)) \nonumber \\
			&= (f \circ \mathbf{\tilde{c}})(Y).
		\end{align}
		
		Naming $\tilde{g}:=g\circ\mathbf{\tilde{c}}: T \to \Sigma_1^2$ and $\tilde{f}:=f\circ\mathbf{\tilde{c}} : T \to \mathbb{R}$, we have then found that, for every $Y \in T$, the function $\hat{f}$ already defined is such that $\hat{f}(\tilde{g}(Y)) = \tilde{f}(Y)$. Moreover it is simple to verify from the definitions that $g, f$ are differentiable functions in $\int S$ and thus by composition $\tilde{g}, \tilde{f}$ are differentiable in $T$ as well. Since $T\subset \mathbb{R}^{18}$ is an open region, Proposition \ref{prop:lin_system_method} applies and we conclude that, for every $Y_0 \in T$, the linear system
		
        \begin{equation}\label{eq:proof_applied:syst_aux}
            \left\lbrace
            \begin{aligned}
                \Df{Y_0} \tilde{g} \cdot \df Y &= 0 \\
                \Df{Y_0} \tilde{f} \cdot \df Y &= \delta E \neq 0
            \end{aligned}
            \right.
        \end{equation}
in the variable $\df Y \in \mathbb{R}^{18}$ is inconsistent. 

		Now we consider the other linear system

        \begin{equation}\label{eq:proof_applied:syst_final}
            \left\lbrace
            \begin{aligned}
                \Df{X_0} g \cdot \df X &= 0 \\
                \Df{X_0}\left(  \sum_{k=0}^3 R_{k}^2\right) \cdot \df X &= 0\\
                \Df{X_0} f \cdot \df X &= \delta E,\\
            \end{aligned}
            \right.
        \end{equation}
defined at $X_0\in\int S$, in the variable $\df X \in \mathbb{R}^{19}$. We wish to conclude that it is also inconsistent. It suffices to show that, given a hypothetical solution $\df \bar X \in \mathbb{R}^{19}$ of \eqref{eq:proof_applied:syst_final} at $X_0\in\int S$, we can build a solution $\df \bar Y\in\mathbb{R}^{18}$ of \eqref{eq:proof_applied:syst_aux} at $Y_0:=\mathbf{\tilde{h}}(X_0)\in T$. The procedure is clear. Under the conditions just given, we define, for $k=1,\dots,18$,

		\begin{equation}\label{eq:proof_applied:dy_def}
			\df \bar Y_k := \sum_{j=1}^{19} \dpar{
				\mathbf{\tilde{h}}_k			
			}{X_j}{}\left( X_0 \right) \df \bar X_j.
		\end{equation}
		
		It is intuitive to invert the relationship above and write $\df \bar X_j$ as a combination of $\df \bar Y_k$. First recall that $\mathbf{\tilde{h}}$ merely transforms the first 4 coordinates of $X$ to the three hyperspherical angles $\alpha,\beta,\gamma$. It follows directly that $\partial \mathbf{\tilde{h}}_k/\partial X_j = \delta_{k,j-1}$ for $k = 4,\dots,18$. For this range of $k$, thus, \eqref{eq:proof_applied:dy_def} yields $\df \bar Y_k = \df \bar X_{k+1}$, as expected. Now for $k=1,2,3$, it gives
		
		\begin{align}
			\df \bar Y_k &= \sum_{j=1}^{19} \dpar{
				\mathbf{\tilde{h}}_k			
			}{X_j}{}\left( X_0 \right) \df \bar X_j 
			\nonumber\\
			&= \sum_{j=1}^{4} \dpar{
				\mathbf{\tilde{h}}_k			
			}{X_j}{}\left( X_0 \right) \df \bar X_j 
			\nonumber\\
			&= \sum_{j=0}^{3} \dpar{
				\mathbf{h}_{k+1}			
			}{R_j}{}\left( \left\lbrace R_{\ell} \right\rbrace \right) \df \bar X_{j+1}^.
		\end{align}
	
		We basically used the fact that those $\mathbf{\tilde{h}}_k$ only depend on the first 4 entries of $X$, namely $\left\lbrace R_k\right\rbrace$, and act on them like $\mathbf{h}_k$, by definition. (Of course, we are denoting as $\left\lbrace R_{\ell} \right\rbrace $ the corresponding entries of $X_0$, specifically.)
		
		At this level it becomes clear that the relationship above can be inverted. In general, because $\mathbf{h}$ is invertible in $\left\lbrace R_{\ell}\right\rbrace$ (recall that $X_0\in\int S$), we would write the $\df R_j$ as a combination of $(\df r, \df \alpha, \df \beta, \df \gamma)$, the coefficients being the partial derivatives of $\mathbf{c}_j$. However, the corresponding value of $\df r = \df \mathbf{h}_1$ would vanish, since
		
		\begin{equation}
			\df r = \sum_{j=0}^3 \dpar{\mathbf{h}_1}{ R_j}{}(\left\lbrace R_\ell\right\rbrace)\df R_j  
		\end{equation}
while
$\mathbf{h}_1(\left\lbrace R_\ell\right\rbrace) = \sqrt{\textstyle\sum_{\ell=0}^3R_\ell ^2}$;
it then follows that the sum is proportional to $R_0\df R_0 + \dots + R_2\df R_2$, which is zero from the hypothesis that $\df \bar X$ is a solution of \eqref{eq:proof_applied:syst_final}. Therefore, we can write, for $j=1,...,4$,

		\begin{equation}
			\df \bar X_j = \sum_{k=1}^3 \dpar{\mathbf{\tilde{c}}_j}{Y_k}{}(Y_0)\df \bar Y_k.
		\end{equation}
		
		Now the sum above can be extended to $k=18$ since those $\mathbf{\tilde{c}}_j$ depend only on the first three entries of $Y$; and, for $j=5,\dots,19$, we already know the equation above to hold, since in this range $\partial\mathbf{\tilde{c}}_j/\partial Y_k=\delta_{j,k+1}$ and thus the equation above would give $\df \bar X_j = \df \bar Y_{j-1}$, which was already shown to be the case. Therefore we can write, for $j=1,\dots,19$,

		\begin{equation}\label{eq:proof_applied:dy_inv}
			\df \bar X_j = \sum_{k=1}^{18} \dpar{\mathbf{\tilde{c}}_j}{Y_k}{}(Y_0)\df \bar Y_k,
		\end{equation}		
which is the inverse of \eqref{eq:proof_applied:dy_def}. We emphasize that such an inversion, though intuitive, was not ``trivial'', since $\df \bar X$ and $\df \bar Y$ and  are in principle defined in vector spaces of different dimensionalities, namely $19$ and $18$, and therefore it was essential to use the very special constraint that $\df \bar X$ satisfies by definition. 
		
		With the aid of \eqref{eq:proof_applied:dy_inv}, it is straightforward to show that $\df \bar Y$, as defined in \eqref{eq:proof_applied:dy_def}, is a solution of the system \eqref{eq:proof_applied:syst_aux}. For instance, the left hand side of the first of  \eqref{eq:proof_applied:syst_aux} gives for this vector
		
		\begin{align}
			\df \tilde{g}(Y_0) 
			&:= \sum_{k=1}^{18}\dpar{\tilde{g}}{Y_k}{}(Y_0)\df \bar Y_k
			\nonumber \\
			&= \sum_{k=1}^{18}\dpar{(g\circ \mathbf{\tilde{c}})}{Y_k}{}(Y_0)\df \bar Y_k
			\nonumber \\
			&= \sum_{k=1}^{18}
				\left( 
					\sum_{j=1}^{19}\dpar{g}{X_j}{}\left(\mathbf{\tilde{c}}(Y_0)\right)\dpar{\mathbf{\tilde{c}}_j}{Y_k}{}(Y_0)				
				\right)			
			\df \bar Y_k
			\nonumber \\
			&= \sum_{j=1}^{19}\dpar{g}{X_j}{}\left(X_0\right)
			\sum_{k=1}^{18}\dpar{\mathbf{\tilde{c}}_j}{Y_k}{}(Y_0)
			\df \bar Y_k
			\nonumber \\
			&= \sum_{j=1}^{19}\dpar{g}{X_j}{}\left(X_0\right)
			\df \bar X_j 
			\nonumber \\
			&= 0,
		\end{align}
where in the last two steps we used, respectively, \eqref{eq:proof_applied:dy_inv} and the first of \eqref{eq:proof_applied:syst_final}. The same reasoning shows, with even smaller effort, that 

		\begin{equation}
			\df \tilde f (Y_0) = \delta E.
		\end{equation}	
		
		\hspace{-\parindent}\textit{Summary. --} We have just proven that, given a solution $\df \bar X\in\mathbb{R}^{19}$ of the linear system \eqref{eq:proof_applied:syst_final} defined at a point $X_0\in\int S$, there exists one solution $\df \bar Y \in \mathbb{R}^{18}$ of the system \eqref{eq:proof_applied:syst_aux} at a certain point $Y_0\in T$. But we had already shown that the latter cannot exist under the hypothesis that Problem \ref{prob:weak} is solvable. The conclusion is that, under such hypothesis, \eqref{eq:proof_applied:syst_final} is an unsolvable system for every $X_0\in \int S$. Equation \eqref{eq:gen_result:system} is obviously just another way of writing the same system. 
	
\section{Explicit calculations for \S \ref{subsec:counter_ex:excit_cons}}
\label{sec:app:counter_ex_explicit}

\hspace{-\parindent}\textit{Derivatives of local density matrix with $H^\text{num.}_{\Lambda,\Delta}$. --} Due to the particular form of $\ket{\psi}$ \eqref{eq:counter_ex:dynamics:universe_state}, the universe density matrix is 
    
        \begin{equation}
            \rho = \kb{\psi} =
            \begin{pmatrix}
                \absq{\psi_0} &
                \psi_0\psi_B^* &
                \psi_0\psi_A^* &
                0 \\
                \psi_0^*\psi_B &
                \absq{\psi_B} &
                \psi_A^*\psi_B &
                0 \\
                \psi_0^*\psi_A &
                \psi_A\psi_B^* &
                \absq{\psi_A} &
                0 \\
                0 &
                0 &
                0 &
                0 
            \end{pmatrix}
        \end{equation}
so that, with $\hint=H^\text{num.}_{\Lambda,\Delta}$ \eqref{eq:counter_ex:num_ham},  
		\begin{widetext}   
        \begin{equation}
            \hint \rho = 
            \begin{pmatrix}
                \Delta \absq{\psi_0} &
                \Delta \psi_0 \psi_B^* &
                \Delta \psi_0 \psi_A^* &
                0 \\
                \Lambda^*\psi_0^*\psi_A - \Delta \psi_0^* \psi_B &
                \Lambda^* \psi_A \psi_B^* - \Delta \absq{\psi_B} &
                \Lambda^*\absq{\psi_A} - \Delta\psi_A^*\psi_B & 0 \\ 
                \Lambda\psi_0^*\psi_B - \Delta \psi_0^* \psi_A &
                \Lambda \absq{\psi_B} - \Delta\psi_A \psi_B^*  &
                \Lambda\psi_A^*\psi_B - \Delta\absq{\psi_A} & 0 \\
                0&0&0&0
            \end{pmatrix}
        \end{equation}
and, from $\rho\hint = (\hint \rho)\hc$,
        
        \begin{equation}\label{eq:counter_ex:appendix:rho_hint}
            \rho\hint =
            \begin{pmatrix}
                \Delta \absq{\psi_0} &
                \Lambda\psi_0\psi_A^* - \Delta \psi_0 \psi_B^* &
                \Lambda^*\psi_0\psi_B^* - \Delta \psi_0 \psi_A^* &
                0\\
                \Delta \psi_0^* \psi_B &
                \Lambda \psi_A^* \psi_B - \Delta \absq{\psi_B} &
                \Lambda^* \absq{\psi_B} - \Delta\psi_A^* \psi_B  &
                0\\
                \Delta \psi_0^* \psi_A &
                \Lambda\absq{\psi_A} - \Delta\psi_A\psi_B^* &
                \Lambda^*\psi_A\psi_B^* - \Delta\absq{\psi_A} & 
                0\\                
                0&0&0&0
            \end{pmatrix}.
        \end{equation}
		\end{widetext}
    
	This gives the Hermitian matrix
    
        \begin{multline}
            -i\left[\hint,\rho\right] = \\
            \begin{pmatrix}
                0 &
                i\psi_0\left(\Lambda  \psi_A^* - 2\Delta  \psi_B^*\right) &
                -i\psi_0\left( 2\Delta \psi_A^* - \Lambda^* \psi_B^* \right) & 
                0 \\
                * &
                2\im{\left(\Lambda^*\psi_A\psi_B^*\right)} &
                -i\Lambda^*\left( \absq{\psi_A}-\absq{\psi_B}\right) &
                0 \\
                *&
                *&
                -2\im{\left(\Lambda^*\psi_A\psi_B^*\right)} &
                0\\
                *&*&*&0
            \end{pmatrix}.
        \end{multline}

	In turn, $H\sa+H\sb=\text{diag}(0, \omega\sb,\omega\sa,\omega\sa+\omega\sb)$ yields 
	           
        \begin{multline}
            -i\left[H\sa +H\sb, \rho\right] =\\
            \begin{pmatrix}
                0 &
                i\psi_0\omb\psi_B^*&
                i\psi_0\oma\psi_A^*&
                0\\
                *&0&
                i(\oma-\omb)\psi_A^*\psi_B&
                0\\
                *&*&0&0\\
                *&*&*&0
            \end{pmatrix}
        \end{multline}

and thus, since $\dot{\rho} = -i\left[H,\rho\right]$,

		\begin{widetext}
        \begin{equation}
            \dot{\rho} =
            \begin{pmatrix}
                0 &
                i\psi_0\left[
                    \Lambda \psi_A^*+
                    (\omb-2\Delta)\psi_B^*
                \right] &
                i\psi_0\left[
                    \Lambda^* \psi_B^*+
                    (\oma-2\Delta)\psi_A^*
                \right] &
                0 \\
                * &
                -2\im{\left(\Lambda\psi_A^*\psi_B\right)} &
                i\left[ 
                    (\oma-\omb)\psi_A^*\psi_B -
                    \Lambda^*\left(\absq{\psi_A}-\absq{\psi_B}\right)
                \right] &
                0 \\
                * & * &
                +2\im{\left(\Lambda\psi_A^*\psi_B\right)}&
                0\\
                *&*&*&0
            \end{pmatrix}.
        \end{equation}           
        \end{widetext}

	By partial-tracing the equation above, we immediately obtain the pair \eqref{eq:counter_ex:dynamics:local_derivatives}. 
	
\hspace{-\parindent}\textit{Average of $H$. --} From Equations \eqref{eq:counter_ex:dynamics:local_states} and the definitions of $H\sj$ we immediately have $\avg{H\sj}=\absq{\psi_j}\omj$. Now, the trace of \eqref{eq:counter_ex:appendix:rho_hint} gives $\avg{\hint}=2\re{\left(
                    \Lambda \psi_B \psi_A^*
                \right)}
            - (\absq{\psi_A}+\absq{\psi_B})\Delta
            + \absq{\psi_0}\Delta$, so that
            
        \begin{equation}
        \begin{aligned}
            \avg{H} &= 
            \absq{\psi_A}\oma 
            + \absq{\psi_B}\omb 
            + 2\re{\left(
                    \Lambda \psi_B \psi_A^*
                \right)} \\
            &- (\absq{\psi_A}+\absq{\psi_B})\Delta
            + \absq{\psi_0}\Delta
		\end{aligned}        
        \end{equation}
    which by normalization may be rewritten as \eqref{eq:counter_ex:consistency:mean_h}.

\bibliography{bibliography.bib}

\end{document}